\numberwithin{equation}{section}
\def\eps{\varepsilon }
\newcommand\br{\begin{remark}}
\newcommand\er{\end{remark}}
\newcommand\bp{\begin{pmatrix}}
\newcommand\ep{\end{pmatrix}}
\newcommand{\be}{\begin{equation}}
\newcommand{\ee}{\end{equation}}
\newcommand{\ba}[1]{\begin{array}{#1}}
\newcommand{\ea}{\end{array}}
\newcommand{\beg}{\begin{example}}
\newcommand{\eeg}{\end{exaplem}}
\newcommand{\bpr}{\begin{proposition}}
\newcommand{\epr}{\end{proposition}}
\newcommand{\bt}{\begin{theorem}}
\newcommand{\et}{\end{theorem}}
\newcommand{\bc}{\begin{corollary}}
\newcommand{\ec}{\end{corollary}}
\newcommand{\bl}{\begin{lemma}}
\newcommand{\el}{\end{lemma}}
\newcommand{\bd}{\begin{definition}}
\newcommand{\ed}{\end{definition}}
\newcommand{\brs}{\begin{remarks}}
\newcommand{\ers}{\end{remarks}}
\newcommand{\zstroke}{%
  \text{\ooalign{\hidewidth -\kern-.3em-\hidewidth\cr$z$\cr}}%
}
\newtheorem{theorem}{Theorem}[section]
\newtheorem{proposition}[theorem]{Proposition}
\newtheorem{corollary}[theorem]{Corollary}
\newtheorem{lemma}[theorem]{Lemma}
\theoremstyle{remark}
\newtheorem{remark}[theorem]{Remark}
\theoremstyle{definition}
\newtheorem{definition}[theorem]{Definition}
\newtheorem{example}[theorem]{Example}
\def\eps {\varepsilon}
\newcommand\R{\mathbb R} 
\newcommand\C{\mathbb C}
\newcommand{\N}{\mathbb N}
\newcommand{\grandO}{\cO}
\newcommand\cF{{\mathcal F}}
\newcommand\cM{{\mathcal M}}
\newcommand\cO{{\mathcal O}}
\newcommand\cR{{\mathcal R}}
\newcommand\cS{{\mathcal S}}
\newcommand{\Number}{\mathcal{N}}
\newcommand{\id}{\mathbbm{1}}
\newcommand{\du}{\mathop{}\!\mathrm{d} u}
\newcommand{\ds}{\mathop{}\!\mathrm{d} s}
\title{
{\large Report}
}
\title[Expansion of the Many-body Quantum Gibbs State]{Expansion of the Many-body Quantum Gibbs State of the Bose--Hubbard Model on a Finite Graph}
\author{Zied Ammari}
\address{Univ Rennes, [UR1], CNRS, IRMAR - UMR 6625, F-35000 Rennes, France.}
\email{zied.ammari@univ-rennes1.fr}
\author{Shahnaz Farhat}
\address{School of Science, Constructor University Bremen, Campus Ring 1, 28759 Bremen, Germany.}
\email{sfarhat@constructor.university}
\thanks{S.F.\ and S.P.\ acknowledge funding by the Deutsche Forschungsgemeinschaft (DFG, German Research Foundation) – project number 505496137. Z.A.\ acknowledges funding by the French National
Research Agency ANR – project number 22-CE92-0013.
}
\author{S\"oren Petrat}
\address{School of Science, Constructor University Bremen, Campus Ring 1, 28759 Bremen, Germany.}
\email{spetrat@constructor.university}
\begin{document}

\begin{abstract}
We consider the many-body quantum Gibbs state for the Bose--Hubbard model on a finite graph at positive temperature. We scale the interaction with the inverse temperature, corresponding to a mean-field limit where the temperature is of the order of the average particle number. For this model it is known that the many-body Gibbs state converges, as temperature goes to infinity, to the Gibbs measure of a discrete nonlinear Schr\"odinger equation, i.e., a Gibbs measure defined in terms of a one-body theory. In this article we extend these results by proving an expansion to any order of the many-body Gibbs state with inverse temperature as a small parameter. The coefficients in the expansion can be calculated  as vacuum expectation values using a recursive formula, and we 
compute the first two coefficients explicitly.
\end{abstract}

\date{\today}
\maketitle




\section{Introduction and Main Result}

The derivation of nonlinear Gibbs measures from many-body quantum mechanics has recently received a lot of attention in the context of the Bose gas with pair interaction in the continuum. For example, in \cite{FKSS1, LRN, RS} convergence for the partition function and reduced density matrices is proven. In these results the interaction is scaled down with the inverse temperature, analogous to the limit we consider in this article. Note that for continuous systems in two and three dimensions, a renormalization of the nonlinear Gibbs measure is necessary (see \cite{FKSS, FKSV, LM, LTR}). In this article, we avoid technical difficulties associated with continuous systems by considering the simpler Bose--Hubbard model which is defined on a lattice. We furthermore consider only a finite graph instead of an infinite system for technical simplicity. The goal of this article is to prove a much more precise convergence statement, namely an order by order expansion of the Gibbs state. 

One motivation for our work is that such higher order expansions have recently been proven for the continuous Bose gas in the mean-field limit at zero temperature for low-lying eigenvalues and the corresponding (excited and ground) eigenstates \cite{bossmann_petrat_seiringer_2021}, and for the dynamics \cite{bossmann_petrat_soffer_2021,Falconi_Nelson}; see also \cite{bossmann_review,BLMP} for reviews of these results, and \cite{Bossmann_2023, BLMP2} for applications. At zero temperature and in the mean-field limit, the leading order is described by the Hartree equation (a nonlinear Schr\"odinger equation with convolution-type nonlinearity). Note that there are also results concerning the dynamics of discrete nonlinear Schr\"odinger (DNLS) equations and their derivation from many-body quantum theory, e.g., in \cite{PPZ}. The next-to leading order is described by Bogoliubov theory. The expansion is then proven by using perturbation theory around Bogoliubov theory. In our main result in this article, the Hartree equation appears as well in the limiting Gibbs measure. However, in contrast to the zero-temperature case, we do not see Bogoliubov theory clearly emerging in our high temperature limit.  

Another motivation for our work is that recently the Kubo--Martin--Schwinger condition for the thermal equilibrium of quantum and classical systems has attracted some interest \cite{ammari:hal-02104091, AS, DV, Ven} (see \cite{AGGL, GV, HHW} for older results). In particular, one can interpret the Gibbs measure as the KMS equilibrium state for the discrete nonlinear Schr\"odinger equation. More generally this concept extends to Hamiltonian systems governed by PDEs (see \cite{AS} and references therein). One of the remarkable properties is that the quantum and classical KMS conditions can be linked to each other rigorously through the same high temperature limit that we are considering here.  Therefore our expansion of the many-body quantum Gibbs state  provides information  on the KMS condition as well.  We briefly describe here this relation and refer the reader to \cite{ammari:hal-02104091} for more details.  The Bose--Hubbard model defines a dynamical system $(\alpha_t,\omega_\varepsilon)$ given by  a group of automorphisms over the algebra of bounded operators,
\begin{equation}
\alpha_t(A)= e^{i\frac{t}{\varepsilon} H_{\varepsilon}} \,A \,  e^{-i\frac{t}{\varepsilon}H_{\varepsilon}} \,,
\end{equation}
where $H_\varepsilon$ is the Bose--Hubbard Hamiltonian defined in \eqref{micro_Hamiltonian}, and by a quantum  Gibbs state defined by 
\begin{equation}
\omega_\varepsilon (A) =\frac{{\rm Tr}(e^{-\beta H_{\varepsilon}} A)}{{\rm Tr}(e^{-\beta H_{\varepsilon}} )}\,.
\end{equation}
It is known that the Gibbs state $\omega_\varepsilon$ is the unique KMS state at inverse temperature $\varepsilon\beta$ of the Bose--Hubbard system satisfying 
\begin{equation}
\label{KMS_cond}
\omega_\varepsilon(A \;\alpha_{i\varepsilon\beta}(B)) =\omega_\varepsilon(BA)\,,
\end{equation}
where $\alpha_{i\varepsilon\beta}$ is an analytic extension of the dynamics to complex times. A specific choice of observables in \eqref{KMS_cond} leads to
\begin{equation}
\label{KMS_W_cond}
\omega_\varepsilon\left( W_\varepsilon(f) \, \frac{\alpha_{i\varepsilon\beta}(W_\varepsilon(g))-W_\varepsilon(g)}{i\varepsilon}\right)=
\omega_\varepsilon\left( \frac{[W_\varepsilon(g), W_\varepsilon(f)]}{i\varepsilon}\right)\,,
\end{equation}
where $W_\varepsilon(\cdot)$ are the Weyl operators in  \eqref{Weyloperator}. According to \cite{ammari:hal-02104091}, taking the high temperature limit ($\varepsilon \to 0$)  in the relation \eqref{KMS_W_cond} yields  the classical KMS condition 
studied by G.~Gallavotti and E.~Verboven \cite{GV},
\begin{equation}
\label{KMS_class}
\beta\; \int_{\ell^2(G)}
  e^{i \Re e\langle f,u\rangle} \, \{ e^{i \Re e\langle g,u\rangle}, h(u)\} 
 \;d\mu_\beta(u)\,
=\int_{\ell^2(G)}   \; \{ e^{i \Re e\langle g,u\rangle}, e^{i \Re e\langle f,u\rangle}\}
 \;d\mu_\beta(u)\,,
 \end{equation}
where $f,g\in \ell^2(G)$, $h$ is the Hamiltonian of the DNLS equation in \eqref{eq.ham.dnls},  $\mu_\beta$ is the Gibbs measure, and $\{\cdot,\cdot\}$ is the Poisson bracket. Hence our main Theorem \ref{Theorem1}  provides 
an optimal rate of convergence and an expansion of both sides of \eqref{KMS_W_cond} in terms of the inverse temperature parameter.

Another flavor of this topic is its relationship with entropy and  with Berezin quantization  on symplectic manifolds. In fact the Gibbs state and respectively the Gibbs measure are minimizers of their corresponding von Neumann and Boltzmann entropies, so that we can approach the high temperature limit in terms of these variational problems (see \cite{LRN}). On the other hand,  the high temperature limit can be interpreted as a classical limit for Gibbs states in the framework of deformation quantization (see, e.g., \cite{BSP,Ven}). Such a problem was recently studied  in \cite{Ven} and a convergence (to the leading order without expansion) similar to our result  is proven in \cite[Proposition 3.3]{Ven}. The argument is based on Berezin--Lieb and Peierls--Bogolyubov inequalities \cite{Li}.  It would be interesting to apply our method to these two approaches.

\subsection{General framework}
Let $G=(V,E)$ be a finite (undirected and simple) graph, with $V$ the set of vertices and $E$ the set of edges. The degree of a vertex $x\in V$ is denoted by $\deg(x)$. We consider the one-body (complex) Hilbert space ${{\ell}}^2(G)$, endowed with the standard scalar product and norm
\begin{equation}
\langle u,v \rangle=\sum_{x\in V}  \overline{ u(x)} v(x), \qquad \Vert u\Vert = \left(\sum_{x\in V} |u(x)|^2\right)^{1/2} .
\end{equation}
We will sometimes use the orthonormal basis $\{e_x\}_{x\in V }$ of $\ell^2(G)$ defined by
\begin{equation}
e_x(y)=\delta_{x y},\qquad \forall y \in V. 
\end{equation}
The symmetric Fock space $\cF$ is defined as
\begin{equation}
\cF :=\Gamma (\ell^2(G))=  \underset{m\geq0}{\bigoplus}  \ell^2(G)^{ \otimes^m_s} \simeq \underset{m\geq0}{\bigoplus}  \ell_s^2(G^m) ,
\end{equation}
where $\otimes_s$ denotes the symmetric tensor product, and $\ell_s^2(G^m) $ is the symmetric $\ell^2$ space over $G^m$. Let $a^*_x$ and $a_x$ be the usual creation and annihilation operators satisfying the canonical commutation relations
\begin{equation}
[a_x,a^*_y] = \delta_{x y}, \qquad [a_x,a_y] = 0 = [a^*_x,a^*_y],
\end{equation}
and set, for any $u\in \ell^2 (G)$,
\begin{equation}
 a^*(u)=\sum_{x\in V}  u(x) \, a^*_x, \qquad a(u)=\sum_{x\in V} \overline{u(x)} \, a_x.
\end{equation}
The second quantization of an operator $B$ on ${{\ell}}^2(G)$ with matrix elements $B_{xy} = \langle e_x, B e_y \rangle$ is defined as
\begin{equation}
{\rm d} \Gamma (B):= \sum_{x,y\in V} a^*_x B_{xy} a_y,
\end{equation}
and we define the number operator as
\begin{equation}
\Number := {\rm d} \Gamma (\id) = \sum_{x\in V} a^*_x a_x.
\end{equation}
For any small parameter $\eps > 0$, coupling constant $\lambda> 0$, and chemical potential $\kappa < 0$, we define the Bose--Hubbard Hamiltonian on $\cF$ as
\begin{equation}\label{micro_Hamiltonian}
H_\eps:= \frac{\eps}{2}\sum_{x,y \in V,  \,  x\sim y} (a^*_x-a^*_y)(a_x-a_y) -\eps \kappa    \sum_{x\in V} a^*_x a_x  +\eps^2 \frac{\lambda}{2} \sum_{x\in V} a^*_x a^*_x a_x a_x,
\end{equation}
where $x\sim y$ means that $x$ and $y$ are nearest neighbors.
Note that by introducing the discrete Laplacian $\Delta_d$ as
\begin{equation}
(\Delta_d u)(x):= -\deg(x) u(x)+\sum_{y\in V, \, y\sim x} u(y),
\end{equation}
we can rewrite the Bose--Hubbard Hamiltonian as
\begin{equation}
H_\eps= \eps {\rm d}\Gamma (-\Delta_d -\kappa \id)+\eps^2 \frac{\lambda}{2}  \sum_{x\in V} a^*_x a^*_x a_x a_x.
\end{equation}


\subsection{Main result}

The Gibbs state $\omega_\eps$ at inverse temperature $\beta > 0$ is defined as
\begin{equation}\label{Gibbstate}
\omega_\eps(A): = \frac{1}{Z_\eps} {\rm Tr}(e^{-\beta H_\eps} A)
\end{equation}
for any operator $A$ on $\cF$, where $Z_\eps := {\rm Tr}(e^{-\beta H_\eps})$ is the partition function. Note that $Z_\eps < \infty$ since we chose $\kappa < 0$, see, e.g., \cite[Proposition~5.2.27]{Bratelli-Robinson}. We will keep the inverse temperature as a fixed parameter, and instead consider $\varepsilon$ as the small parameter, so the limit $\eps \to 0$ corresponds to a limit where the inverse temperature and the coupling constant each go to zero in the same way. Our goal is to expand $\omega_\eps$ in powers of $\eps$. In order to write down such a series expansion concretely, let us consider a Weyl operator with the right semiclassical structure. For any $f \in \ell^2(G)$ the Weyl operator $W_\eps(f)$ is defined as
\begin{equation} \label{Weyloperator} 
W_\eps(f) := e^{{i \sqrt{\eps}}\Phi(f)}, \quad \text{with} \quad \Phi(f) := \frac{1}{\sqrt{2}} \Big( a(f)+a^*(f) \Big).
\end{equation}
In our main result Theorem~\ref{Theorem1} we prove an expansion for 
\begin{equation}\label{Gibbsstate_Wf}
Z_\eps \;\omega_\eps(W_\eps(f)) = {\rm Tr}(e^{-\beta H_\eps} W_\eps(f)).
\end{equation}
Note that for $f=0$ this implies an expansion for $Z_\eps$, and both results together can be combined into a single expansion for $\omega_\eps(W_\eps(f))$, see Remark~\ref{rem_combined_expansion}. Note, however, that in the limit $\eps \to 0$ both quantities diverge like $c_\eps := (\eps \pi)^{-|V|}$, therefore we write down the expansion for $c_\eps^{-1} {\rm Tr}(e^{-\beta H_\eps} W_\eps(f))$. The expansion could also be written down, e.g., for reduced one-particle density matrices, see Remark~\ref{rem_reduced_densities}.

The leading order in the expansion is a classical Gibbs measure, defined in terms of the Hamiltonian of the discrete nonlinear Schr\"odinger equation
\begin{equation}
\label{eq.ham.dnls}
h(u) := \langle u, (-\Delta_d)u\rangle -\kappa  \Vert u \Vert^2+\frac{\lambda}{2} \sum_{x\in V} |u(x)|^4.
\end{equation}
We introduce the corresponding nonlinear Hartree operator $h^{\mathrm H}$ as
\begin{equation}\label{Hartree_op}
h^{\mathrm H}(u):= -\Delta_d u -\kappa u + \lambda  |u|^2 u.
\end{equation}
The Gibbs measure is defined as 
\begin{equation}
d \mu_\beta(u) = \frac{1}{z_\beta}  \, e^{-\beta h(u)}  \du,
\end{equation}
where $\displaystyle \du=\prod_{x\in V} \du_x$ with $\du_x$ the Lebesgue measure on $\C$, and $z_\beta= \int_{\ell^2(G)} e^{-\beta h(u)} \, \du$. In our setting it is known that
\begin{equation}
\lim_{\eps \rightarrow 0} \omega_\eps(W_\eps(f))=  \int_{\ell^2(G)} e^{\sqrt{2} i \Re e \langle f, u \rangle} \, d\mu_\beta(u),
\end{equation}
see, e.g., \cite{ammari:hal-02104091}. Our main result is the following.

\begin{theorem}[Higher order expansion]\label{Theorem1}
For any $N\in {\N}$,  $f\in \ell^2(G)$, and $ \eps > 0 $ small enough,  we have 
\begin{equation}
 (\eps \pi)^{|V|}  {\rm Tr} (e^{-\beta H_\eps} W(f))=\sum_{j=0}^N \eps^{{j}}  \int_{\ell^2(G)} e^{\sqrt{2} i \Re e \langle f,u \rangle} C_{{j}}(u,f) e^{-\beta h(u)}   \du  + \grandO(\eps^{N+1}),
\end{equation}
where the $\grandO(\eps^{N+1})$ term may depend on $f$ and the parameters $\beta$, $\kappa$, and $\lambda$. The coefficients  $C_j$ are 
$\varepsilon$-independent and  given as $C_0(u,f)\equiv 1$, and, for $j\geq 1$, 
\begin{equation} \label{expCODD}
\begin{aligned}
C_{{j}}(u,f)& =  
\frac{(-1)^j}{j! \,4^j} \| f \|^{2j}  +  \sum_{m=1}^{2j}  \frac{(-\beta )^m}{m!} \sum_{\ell=m}^{\min(4m-2,2j)} \frac{i^{2j-\ell }}{(2j-\ell)!}   \big \langle A^{(m)}_{{\ell}}(u)    \Omega,  \Phi(f)^{2j-\ell}   \Omega \big \rangle,
\end{aligned}
\end{equation}
where  $\Omega$ is the vacuum in $\cF$ and $A^{(m)}_{{\ell }}(u)$  are defined by the recursive formula in \eqref{recursionrelation}. In particular,
\begin{equation}
C_{{1}}(u,f) = - \frac{\Vert f \Vert^2}{4} -i \frac{\beta}{\sqrt{2}} \langle h^{\mathrm H}(u), f \rangle + \frac{\beta^2}{2} \Vert h^{\mathrm H}(u)  \Vert^2
\end{equation}
and $C_{{2}}(u,f)$ is given in Appendix~\ref{appendix_C_1_C_2}.
\end{theorem}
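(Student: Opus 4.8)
The natural starting point is to split the Hamiltonian into its quadratic and quartic parts, $H_\eps = H_\eps^0 + \eps^2\tfrac{\lambda}{2}\mathbb{W}$ with $H_\eps^0 := \eps\,{\rm d}\Gamma(A)$, $A := -\Delta_d - \kappa > 0$ (note $A>0$ because $\kappa<0$), and $\mathbb{W} := \sum_{x\in V} a_x^* a_x^* a_x a_x$, and to perform a Duhamel (Dyson) expansion of $e^{-\beta H_\eps}$ in powers of the perturbation $\eps^2\tfrac{\lambda}{2}\mathbb{W}$:
\[
{\rm Tr}\big(e^{-\beta H_\eps} W_\eps(f)\big) = \sum_{m\ge 0}\Big(-\tfrac{\eps^2\lambda}{2}\Big)^m\!\!\!\int\limits_{0\le s_1\le\cdots\le s_m\le\beta}\!\!\!{\rm Tr}\big(e^{-(\beta-s_m)H_\eps^0}\mathbb{W}\,e^{-(s_m-s_{m-1})H_\eps^0}\cdots \mathbb{W}\,e^{-s_1 H_\eps^0}\,W_\eps(f)\big)\,\mathrm{d}s,
\]
which converges since $e^{-\beta H_\eps}$ is trace class. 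Because $H_\eps^0$ is quadratic, $e^{sH_\eps^0}a_x^{}e^{-sH_\eps^0}$ and $e^{sH_\eps^0}a_x^*e^{-sH_\eps^0}$ remain linear combinations of the $a_y, a_y^*$, so each summand equals $Z_\eps^0\,\omega_\eps^0\big(\mathbb{W}(\sigma_1)\cdots\mathbb{W}(\sigma_m)W_\eps(f)\big)$, where $Z_\eps^0 = {\rm Tr}(e^{-\beta H_\eps^0}) = \prod_j (1-e^{-\beta\eps a_j})^{-1}$, $\omega_\eps^0$ is the (quasi-free) Gibbs state of $H_\eps^0$, and $\mathbb{W}(\sigma)$ is an imaginary-time translate of $\mathbb{W}$. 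By Wick's theorem these expectations are finite sums over pairings, each pairing being a product of entries of the covariance $\omega_\eps^0(a_x^* a_y) = ((e^{\beta\eps A}-1)^{-1})_{yx}$ (and its conjugate), together with the factor $\omega_\eps^0(W_\eps(f)) = e^{-\frac{\eps}{2}\omega_\eps^0(\Phi(f)^2)}$ and with $\Phi(f)$-insertions generated by contracting the $\mathbb{W}$'s against the Weyl operator; equivalently, each summand is a derivative of the Gaussian generating functional of $\omega_\eps^0$.

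The key input is the behaviour of the thermal covariance. With $x = \beta\eps A$ one has, as an operator identity, the Bernoulli expansion
\[
(e^{\beta\eps A}-1)^{-1} = \tfrac{1}{\beta\eps}A^{-1} - \tfrac12\,\id + \sum_{k\ge 1}\tfrac{B_{2k}}{(2k)!}(\beta\eps)^{2k-1}A^{2k-1},
\]
which splits the covariance into an $\grandO(\eps^{-1})$ ``classical'' part $\tfrac{1}{\beta\eps}A^{-1}$, an $\grandO(1)$ ``vacuum'' part $-\tfrac12\,\id$, and $\grandO(\eps)$ corrections; simultaneously one Taylor-expands the propagators $e^{-\sigma\eps A} = \sum_k \tfrac{(-\sigma\eps)^k}{k!}A^k$, the Weyl series $W_\eps(f) = \sum_k \tfrac{(i\sqrt\eps)^k}{k!}\Phi(f)^k$, and $Z_\eps^0$. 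A pair of legs contracted through the classical part carries $\eps^{-1}$ and loses its imaginary-time dependence at $\eps = 0$; hence the contributions in which every leg is routed through the classical part are $s$-independent, the simplex volume $\tfrac{\beta^m}{m!}$ and the powers $(\eps^2)^m (\eps^{-1})^{2m}$ combine, and resumming over $m$ reproduces exactly the complex Gaussian integral over $u\in\ell^2(G)$ with covariance $\tfrac1\beta A^{-1}$, in which each $\mathbb{W}$ becomes $\sum_x |u(x)|^4$ (so the series $\sum_m$ rebuilds $e^{-\beta\frac{\lambda}{2}\sum_x|u(x)|^4}$) and $W_\eps(f)$ becomes $e^{\sqrt2\,i\,\Re e\langle f,u\rangle}$. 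Since $(\eps\pi)^{|V|}Z_\eps^0 \to \int_{\ell^2(G)} e^{-\beta\langle u, Au\rangle}\,\du$ and $h(u) = \langle u, Au\rangle + \tfrac{\lambda}{2}\sum_x|u(x)|^4$, this produces the leading term $\int e^{\sqrt2 i\Re e\langle f,u\rangle}e^{-\beta h(u)}\,\du$, i.e.\ $C_0\equiv 1$, and pins down the normalisation factor $(\eps\pi)^{|V|}$.

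What remains is to organise the subleading contributions, i.e.\ those in which at least one leg is routed through the ``vacuum'' or correction part of the covariance, through a nontrivial Taylor order of a propagator, or through the Weyl series. The crucial point for the truncation is that, once the fully-classical interaction insertions have been resummed into the measure $e^{-\beta h(u)}\du$, the remaining series is effectively an expansion in the number of ``active'' insertions, and every active insertion contributes a positive power of $\eps$; a power-counting lemma then shows that only $m\le 2j$ active interactions can contribute at order $\eps^j$. Performing the classical Gaussian $u$-integration exactly as at leading order -- so that the classical legs of the $\mathbb{W}$'s together with the Taylor corrections of the free propagators assemble into the Hartree field $h^{\mathrm H}(u) = Au + \lambda|u|^2u$ -- one obtains $\int e^{\sqrt2 i\Re e\langle f,u\rangle}C_j(u,f)e^{-\beta h(u)}\,\du$ with $C_j$ of the stated form: the $\Phi(f)^{2j-\ell}$ come from the Weyl series, the vacuum expectation $\langle A^{(m)}_\ell(u)\Omega,\Phi(f)^{2j-\ell}\Omega\rangle$ collects the legs contracted through the vacuum part of the covariance and through the $\Phi(f)$'s, the operators $A^{(m)}_\ell(u)$ being what is produced by peeling off the $\mathbb{W}$'s one at a time in the Duhamel series -- which yields the recursion \eqref{recursionrelation} -- and the term $\tfrac{(-1)^j}{j!\,4^j}\|f\|^{2j}$ is the $j$-th Taylor coefficient of the classical part of $e^{-\frac{\eps}{2}\omega_\eps^0(\Phi(f)^2)}$. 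A direct evaluation of the $m=1,2$ terms gives $C_1$ (with $A^{(1)}_1(u) = a^*(h^{\mathrm H}(u))$ modulo terms annihilating $\Omega$), and the $m\le 4$ computation gives $C_2$ as recorded in Appendix~\ref{appendix_C_1_C_2}. The main obstacle is precisely this combined power-counting and combinatorial bookkeeping -- showing that only finitely many pairings survive at each order and that they assemble into exactly the claimed vacuum-expectation/recursion structure -- together with the uniform-in-$\eps$ trace-norm bound on the truncated series, which genuinely requires the prior resummation of the classical part: a naive term-by-term truncation of the Duhamel series does not decay in $\eps$.
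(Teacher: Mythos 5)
Your route --- a Duhamel/Dyson expansion of $e^{-\beta H_\eps}$ around the quadratic Hamiltonian, Wick's theorem for the quasi-free state, a Bernoulli expansion of the thermal covariance $(e^{\beta\eps A}-1)^{-1}$, and a resummation of the ``classical'' contractions into the Gibbs weight $e^{-\beta h(u)}$ --- is genuinely different from the paper's. The paper never expands in the interaction at all: it inserts the coherent-state resolution of identity $c_\eps\int |u_\eps\rangle\langle u_\eps|\,\du=1$, conjugates $H_\eps$ by the Weyl operator $\widetilde W_\eps(u)$ so that $\widetilde W_\eps(u)^*H_\eps\widetilde W_\eps(u)=h(u)+A_\eps(u)$ with $A_\eps(u)=\sum_{j=1}^4\eps^{j/2}A_j(u)$, and then Taylor-expands only the \emph{finite} exponential $\langle\Omega,e^{-\beta A_\eps(u)}W_\eps(f)\Omega\rangle$ to order $N$ with an explicit integral remainder, controlled pointwise in $u$ by the a priori bound $\langle u_\eps,e^{-\beta H_\eps}u_\eps\rangle\le e^{-\beta\alpha\|u\|^2}$. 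The nonlinear Gibbs measure thus appears exactly, with no infinite series in $\lambda$ ever written down.

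The genuine gap in your proposal is the convergence and rearrangement of the Dyson series, which you assert ``converges since $e^{-\beta H_\eps}$ is trace class.'' That justification is not valid, and in fact the series does not converge absolutely: the $m$-th term carries a simplex volume $\beta^m/m!$, a prefactor $(\eps^2\lambda/2)^m$, and a Wick sum over $O((4m-1)!!)$ pairings each of size $O(\eps^{-2m})$ (since the dominant, ``classical'' part of the covariance is $O(1/(\beta\eps))$ and there are $2m$ contractions), so the terms grow like $m!\,(C\beta\lambda)^m$ uniformly in $\eps$. Everything downstream --- interchanging the sum over $m$ with the splitting of each covariance into classical/vacuum/correction parts, resumming the all-classical contributions into $e^{-\beta\frac{\lambda}{2}\sum_x|u(x)|^4}$, and the claim that the leftover ``active'' insertions admit a power counting with only $m\le 2j$ contributing at order $\eps^j$ --- requires rearranging this non--absolutely-convergent double series, and you give no mechanism for doing so. This is precisely the obstruction that forces the continuum derivations of nonlinear Gibbs measures (Fr\"ohlich--Knowles--Schlein--Sohinger, Lewin--Nam--Rougerie) into much heavier machinery, and it does not disappear on a finite graph because the Fock space is still infinite-dimensional and $\mathbb{W}$ is unbounded. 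Your closing remark that the truncation bound ``genuinely requires the prior resummation of the classical part'' identifies the right difficulty but does not resolve it: the resummation itself is the unproven step. A secondary, smaller issue is the identification of your surviving pairings with the paper's $A^{(m)}_\ell(u)$: the recursion \eqref{recursionrelation} is built from $A_1,\dots,A_4$, which contain the Weyl-shifted \emph{kinetic and chemical-potential} terms as well as the quartic ones, whereas in your scheme the quadratic part sits entirely in the propagators; matching the two would require an additional combinatorial argument that you only sketch.
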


The theorem is proven in Section~\ref{sec_proof_main_result}.

\begin{remark}\label{rem_combined_expansion}
Note that for $f=0$ we have
\begin{equation}
C_{{j}}(u,0)=   \sum_{m=\left\lceil \frac{1}{2} (j+1) \right\rceil}^{2j}  \frac{(-\beta )^m}{m!}   \big \langle A^{(m)}_{2j}(u)   \Omega,    \Omega \big \rangle,
\end{equation}
which are the coefficients in the expansion of $(\eps \pi)^{|V|} Z_\eps$. Combining both expansions
\begin{align}
(\eps \pi)^{|V|} {\rm Tr} (e^{-\beta H_\eps} W(f)) = \sum_{j=0}^N \eps^j \, \widetilde{C}_j(f) + \grandO(\eps^{N+1}), \\ (\eps \pi)^{|V|} {\rm Tr} (e^{-\beta H_\eps}) = \sum_{j=0}^N \eps^j \, \widetilde{C}_j(0) + \grandO(\eps^{N+1}), 
\end{align}
where $\widetilde{C}_0(0) = z_\beta$, leads to an expansion of the Gibbs state. Denoting by $\alpha\in \N^k$ a multi-index with $|\alpha| := \sum_{i=1}^k \alpha_{i}$, we find
\begin{equation}
\begin{aligned}
 \omega_\eps(W_\eps(f)) &= \frac{\widetilde{C}_0(f)}{z_\beta}+\sum_{j=1}^N \eps^j  \left[ \sum_{\ell= 0}^{j-1} \frac{\widetilde{C}_\ell(f)}{z_\beta} \sum_{k=1}^{j-\ell} \sum_{\substack{\alpha \in \N^k \\ |\alpha| = j-\ell}} \prod_{m=1}^k \left( -\frac{\widetilde{C}_{\alpha_m}(0)}{z_\beta} \right) +\frac{\widetilde{C}_j(f)}{z_\beta}  \right] + \grandO(\eps^{N+1})
\\& =  \int_{\ell^2(G)} e^{\sqrt{2} i \Re e \langle f,u \rangle}  \frac{ e^{-\beta h(u)} \du}{z_\beta} 
\\&\quad +\eps  \int_{\ell^2(G)} e^{\sqrt{2} i \Re e \langle f,u \rangle} \Bigg[ - \frac{\Vert f \Vert^2}{4} -i \frac{\beta}{\sqrt{2}} \langle h^{\mathrm H}(u), f \rangle + \frac{\beta^2}{2} \Vert h^{\mathrm H}(u)  \Vert^2 \Bigg]  \frac{ e^{-\beta h(u)}}{z_\beta}   \du
\\ &\quad  - \eps  \Bigg[ \int_{\ell^2(G)}  e^{\sqrt{2} i \Re e \langle f,u \rangle}  \frac{ e^{-\beta h(u)}  }{z_\beta}\du \Bigg] \Bigg[  \int_{\ell^2(G)}    \frac{\beta^2}{2} \Vert h^{\mathrm H}(u) \Vert^2  \frac{ e^{-\beta h(u)} }{z_\beta} \du  \Bigg] +\grandO(\eps^{2}).
\end{aligned}
\end{equation}
\end{remark}

\begin{remark}\label{rem_reduced_densities}
Note that we could also provide an expansion for expectations of other operators than the Weyl operator $W_\eps(f)$, e.g., for the reduced one-particle density matrix
\begin{equation}
{\rm Tr} (e^{-\beta H_\eps} \eps a^*_x a_y).
\end{equation}
For this example the expansion reads
\begin{equation}
\begin{aligned}
(\eps \pi)^{|V|} {\rm Tr} (e^{-\beta H_\eps} \eps a^*_x a_y) &= \int_{\ell^2(G)}\overline{u}_x \,  u_y \, e^{-\beta h(u)} \du
\\&\quad + \eps  \int_{\ell^2(G)}  \  \Bigg[ \frac{\beta^2}{2} \overline{u}_x \, u_y  \Vert h^{\mathrm H}(u)\Vert^2 -\beta u_y \, \langle h^{\mathrm H}(u),\, e_x\rangle  \Bigg] \ e^{-\beta h(u)} \du \\
&\quad + \grandO(\eps^2) .
\end{aligned}
\end{equation}
\end{remark}

\subsection{Notation and summary of proof}
In addition to the Weyl operator
\begin{equation}
W_\eps(f) = e^{{i \sqrt{\frac{\eps}{2}}}( a(f)+a^*(f) )},
\end{equation}
from \eqref{Weyloperator}, let us introduce the rescaled Weyl operator
\begin{equation}
\widetilde{W}_\eps(u) := e^{\frac{1}{\sqrt{\eps}} ( a^*(u) - a(u) )}.
\end{equation}
With the latter we define the coherent state
\begin{equation}\label{coherentstate}
| u_\eps \rangle := \widetilde{W}_\eps(u) |\Omega \rangle = e^{-\frac{\| u \|^2}{2\eps}} e^{\frac{1}{\sqrt{\eps}} a^*(u)} |\Omega \rangle, 
\end{equation}
where $|\Omega\rangle$ is the vacuum in $\cF$, and the equality follows from the Baker--Campbell--Hausdorff formula.
Then, by direct computation on the finite graph $G$, we have the resolution of identity
\begin{equation} \label{resolofident}
c_\eps \,  \int_{\ell^2(G)} |u_\eps \rangle \,    \langle u_\eps |   \du = 1, 
\end{equation}
where  $c_\eps:= (\eps \pi)^{-|V|}$. The idea of the proof is to insert the identity \eqref{resolofident} in the computation of ${\rm Tr}(e^{-\beta H_\eps} W_\eps(f))$, so that it only contains $\widetilde{W}_\eps(u)^* H_\eps \widetilde{W}_\eps(u)$ instead of $H_\eps$ directly. This Weyl-transformed Hamiltonian can be written as
\begin{align}\label{Hamiltonian_expansion}
\widetilde{W}_\eps(u)^* H_\eps \widetilde{W}_\eps(u) = h(u) + \sum_{j=1}^4 \eps^{j/2} A_j(u),
\end{align}
for some $\eps$-independent operators $A_j(u)$, see Section~\ref{expansion_of_Gibbs_state}. It remains to expand the exponential of \eqref{Hamiltonian_expansion}, and $W_\eps(f)$, in powers of $\eps$. This is done in Section~\ref{sec_Taylor_expansion} where we compute the coefficients of the expansion and thus define the remainder terms. In Section~\ref{sec_proof_main_result} we prove Theorem~\ref{Theorem1} by estimating the remainder terms. Finally, we explicitly compute the first two coefficients of the expansion in Appendix~\ref{appendix_C_1_C_2}.

\textbf{Notation.} In the following sections, we use in our estimates constants $C$ that may depend on the parameters of our model, and that can be different from line to line.

\section{Formal Expansion}

\subsection{Resolution of identity}\label{expansion_of_Gibbs_state}
First, we insert the resolution of identity \eqref{resolofident} into the computation of ${\rm Tr}(e^{-\beta H_\eps} W_\eps(f))$.
\begin{lemma}[Integral formula]\label{Decompositionterms} We have 
\begin{equation} \label{integral formula} 
\frac{1}{c_\eps} {\rm Tr}(e^{-\beta H_\eps} W_\eps(f)) =  \int_{\ell^2(G)}    e^{\sqrt{2} i \Re e \langle f,u \rangle}   \, \langle \Omega ,   e^{-\beta A_\eps(u) }   W_\eps(f)  \Omega \rangle \, e^{-\beta h(u)}  \du,
\end{equation}
 where $W_\eps(\cdot)$ is the Weyl operator defined in \eqref{Weyloperator}, and
 \begin{equation} \label{expA}
   A_\eps(u) :=  \sum_{j=1}^4 \eps^{j/2} A_j (u)
 \end{equation}
 with 
 \begin{align}\label{Coefficients_Ham}
 \begin{split}
  &  A_{1}(u):=  a^*(h^{\mathrm H}(u))+a(h^{\mathrm H}(u)),
\\ & A_{2}(u):=  {\rm d}\Gamma (-\Delta_d)
  -\kappa \Number  
  +\frac{\lambda}{2} \sum_{x\in V} (a^*_x a^*_x u_x^2  +4 a^*_x a_x |u_x|^2 +a_x a_x \bar u_x^2),
\\& A_{3}(u):= \frac{\lambda}{2}    \sum_{x\in V}( a^*_x a^*_x a_x u_x +a^*_x a_x a_x \bar u_x ),
\\ &A_{4}(u):= \frac{\lambda}{2}  \sum _{x\in V} a^*_x a^*_x a_x a_x.
\end{split}
 \end{align}

\end{lemma}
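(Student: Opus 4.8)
The plan is to prove \eqref{integral formula} by a direct computation: insert the resolution of identity \eqref{resolofident} into the trace and then conjugate both $e^{-\beta H_\eps}$ and $W_\eps(f)$ by the rescaled Weyl operator $\widetilde{W}_\eps(u)$, so as to make their action on the vacuum explicit.

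First I would observe that $e^{-\beta H_\eps} W_\eps(f)$ is trace class, being the product of the trace-class operator $e^{-\beta H_\eps}$ (trace class because $\kappa<0$, as already noted) with the unitary $W_\eps(f)$; hence \eqref{resolofident} gives
\[
{\rm Tr}(e^{-\beta H_\eps} W_\eps(f)) = c_\eps \int_{\ell^2(G)} \big\langle u_\eps,\, e^{-\beta H_\eps} W_\eps(f)\, u_\eps \big\rangle \, \du .
\]
Using $|u_\eps\rangle = \widetilde{W}_\eps(u)|\Omega\rangle$ from \eqref{coherentstate} and inserting $\widetilde{W}_\eps(u)\,\widetilde{W}_\eps(u)^* = 1$ between the two operators, the integrand becomes $\big\langle \Omega,\, \big(\widetilde{W}_\eps(u)^* e^{-\beta H_\eps}\widetilde{W}_\eps(u)\big)\,\big(\widetilde{W}_\eps(u)^* W_\eps(f)\widetilde{W}_\eps(u)\big)\,\Omega\big\rangle$, and since $\widetilde{W}_\eps(u)$ is unitary the first factor equals $e^{-\beta\, \widetilde{W}_\eps(u)^* H_\eps\widetilde{W}_\eps(u)}$.

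The heart of the argument is the displacement identity $\widetilde{W}_\eps(u)^* a_x \widetilde{W}_\eps(u) = a_x + \eps^{-1/2} u_x$ and $\widetilde{W}_\eps(u)^* a_x^* \widetilde{W}_\eps(u) = a_x^* + \eps^{-1/2}\bar u_x$, which follows from the Baker--Campbell--Hausdorff formula. Substituting it into $H_\eps = \eps\,{\rm d}\Gamma(-\Delta_d-\kappa) + \eps^2\tfrac{\lambda}{2}\sum_x a_x^* a_x^* a_x a_x$ and collecting powers of $\eps^{1/2}$: the quadratic part produces the scalar $\langle u,(-\Delta_d-\kappa)u\rangle$ at order $\eps^0$, a term linear in $a, a^*$ at order $\eps^{1/2}$, and $\eps\big({\rm d}\Gamma(-\Delta_d) - \kappa\Number\big)$ at order $\eps$; the quartic part, after expanding $(a_x^* + \eps^{-1/2}\bar u_x)^2(a_x + \eps^{-1/2}u_x)^2$ and multiplying by $\eps^2$, contributes at all five orders $\eps^0, \eps^{1/2}, \eps, \eps^{3/2}, \eps^2$, in particular the scalar $\tfrac{\lambda}{2}\sum_x |u_x|^4$ at order $\eps^0$ and $\lambda\sum_x |u_x|^2(u_x a_x^* + \bar u_x a_x)$ at order $\eps^{1/2}$. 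The order-$\eps^0$ scalar pieces add up to $h(u)$, the order-$\eps^{1/2}$ pieces to $a^*(h^{\mathrm H}(u)) + a(h^{\mathrm H}(u)) = A_1(u)$, and the orders $\eps, \eps^{3/2}, \eps^2$ give the operators $A_2(u), A_3(u), A_4(u)$ of \eqref{Coefficients_Ham}; thus $\widetilde{W}_\eps(u)^* H_\eps\widetilde{W}_\eps(u) = h(u) + A_\eps(u)$ with $A_\eps(u)$ as in \eqref{expA}. Since $h(u)$ is a scalar, $\widetilde{W}_\eps(u)^* e^{-\beta H_\eps}\widetilde{W}_\eps(u) = e^{-\beta h(u)}\, e^{-\beta A_\eps(u)}$, and $e^{-\beta A_\eps(u)}$ makes sense because $A_\eps(u) = \widetilde{W}_\eps(u)^* H_\eps\widetilde{W}_\eps(u) - h(u)$ is unitarily equivalent to $H_\eps - h(u)$ and hence bounded below.

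The same displacement identity, applied to $W_\eps(f) = e^{i\sqrt{\eps/2}\,(a(f)+a^*(f))}$, gives $\widetilde{W}_\eps(u)^*(a(f)+a^*(f))\widetilde{W}_\eps(u) = a(f)+a^*(f) + 2\eps^{-1/2}\,\Re e\langle f,u\rangle$, so the scalar phase factors out: $\widetilde{W}_\eps(u)^* W_\eps(f)\widetilde{W}_\eps(u) = e^{i\sqrt{2}\,\Re e\langle f,u\rangle}\, W_\eps(f)$. Pulling the scalar factors $e^{-\beta h(u)}$ and $e^{i\sqrt{2}\,\Re e\langle f,u\rangle}$ out of the vacuum expectation and dividing by $c_\eps$ then yields \eqref{integral formula}. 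The only real obstacle is computational bookkeeping --- tracking the five orders in $\eps^{1/2}$ generated by conjugating the quartic interaction term --- together with the (standard) justification that the resolution of identity may be inserted under the trace of a trace-class operator and that trace and integral may be interchanged.
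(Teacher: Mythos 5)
Your proposal is correct and follows essentially the same route as the paper's own proof: insert the resolution of identity \eqref{resolofident}, conjugate $e^{-\beta H_\eps}$ and $W_\eps(f)$ by $\widetilde{W}_\eps(u)$, and use the shift relations \eqref{Com_x}--\eqref{Com_f} to collect powers of $\eps^{1/2}$ into $h(u)+A_\eps(u)$ and the phase $e^{\sqrt{2}i\Re e\langle f,u\rangle}$. The only difference is that you spell out some standard justifications the paper leaves implicit (trace-class property of $e^{-\beta H_\eps}W_\eps(f)$, semiboundedness of $h(u)+A_\eps(u)$ by unitary equivalence to $H_\eps$), which is fine.
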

\begin{proof}
We use  the resolution of identity  \eqref{resolofident} to expand 
\vskip 2mm
\begin{align}
\frac{1}{c_\eps}   {\rm Tr}(e^{-\beta H_\eps} W_\eps(f)) 
 & =    \int_{\ell^2(G)} \langle u_\eps, \  e^{-\beta H_\eps} \  W_\eps(f) \  u_\eps \rangle  \du
\nonumber\\ & = \int_{\ell^2(G)} \langle \Omega ,   \  \widetilde{W}_\eps(u)^* e^{-\beta H_\eps } \widetilde{W}_\eps(u)  \  \widetilde{W}_\eps(u)^* W_\eps(f) \widetilde{W}_\eps(u)  \  \Omega \rangle \du
\nonumber\\ & =  \int_{\ell^2(G)} \langle \Omega ,   \  e^{-\beta \widetilde{W}_\eps(u)^* H_\eps \widetilde{W}_\eps(u)}  \  \widetilde{W}_\eps(u)^* \  W_\eps(f)  \  \widetilde{W}_\eps(u) \Omega \rangle \du.
\end{align}
 To expand the above expression in powers of $\eps$, 
we use the shifting properties of the Weyl operator, i.e.,
\begin{align}
\widetilde{W}_\eps(u)^* \, a^*_x \, \widetilde{W}_\eps(u) &= a^*_x +\frac{1}{\sqrt{\eps}}\overline{ u(x)},&
\widetilde{W}_\eps(u)^* \, a_x \, \widetilde{W}_\eps(u) &= a_x +\frac{1}{\sqrt{\eps}} u(x),  \label{Com_x} \\
\widetilde{W}_\eps(u)^* \, a^*(f) \, \widetilde{W}_\eps(u) &= a^*(f) +\frac{1}{\sqrt{\eps}} \langle u,f \rangle,& 
\widetilde{W}_\eps(u)^* \, a(f) \, \widetilde{W}_\eps(u) &= a(f) +\frac{1}{\sqrt{\eps}} \langle f, u \rangle.  \label{Com_f}
\end{align}
Using \eqref{Com_x} gives us directly \eqref{Hamiltonian_expansion} with the coefficients \eqref{Coefficients_Ham}.
Furthermore, using \eqref{Com_f}, we get 
\begin{equation}
\widetilde{W}_\eps(u)^* \,  W_\eps(f)  \,  \widetilde{W}_\eps(u) = { e^{ i\sqrt{\frac{\eps}{2}} \  \widetilde{W}_\eps(u)^* \big(a^*(f)+a(f)\big) \widetilde{W}_\eps(u)}} = e^{\sqrt{2} i \Re e \langle f,u \rangle} \,  W_\eps(f),
\end{equation}
which proves \eqref{integral formula}.
\end{proof}

\subsection{Taylor expansion}\label{sec_Taylor_expansion}

\noindent In order to expand the right-hand side of \eqref{integral formula}, let us write down the Taylor expansions with remainders for $e^{-\beta A_\eps(u)}$ and $W_\eps(f)$. 
For given $N\in \N$, we have 
\begin{subequations}\label{Taylor_separate}
\begin{align}
e^{-\beta A_\eps(u)}&= \underbrace{ \sum_{m=0}^{N}\frac{(-\beta A_\eps(u))^m}{m!}}_{=: \cM^{(N)}_A} + \underbrace{(-\beta A_\eps(u))^{N+1} \int_0^1 e^{-s\beta A_\eps(u)} \frac{(1-s)^N}{N!} \ds}_{=: \cR^{(N)}_A}, \label{Taylor1} \\ 
W_\eps(f)&= \underbrace{\sum_{n=0}^{N} \eps^{\frac{n}{2}}\frac{(i  \Phi(f))^{n}}{n!}}_{=: \cM^{(N)}_f} + \underbrace{(i \sqrt{ \eps} \Phi(f))^{N+1} \int_0^1 e^{i  s \sqrt{ \eps} \Phi(f)} \frac{(1-s)^N}{N!} \ds}_{=: \cR^{(N)}_f}. \label{Taylor2}
\end{align}
\end{subequations}
 The above formulas make sense by functional calculus.   
Note that \eqref{Taylor1} is not an expansion in powers of $\sqrt{\eps}$ yet because $A_\eps(u)$ contains different powers of $\sqrt{\eps}$. Furthermore, according to \eqref{integral formula}, we only need to evaluate $(A_\eps(u))^m$ acting on the vacuum $\Omega$. Recall that $\Omega$ is an analytic vector of the field operator $\Phi(f)$ and it is in the domain of any Wick monomial. Ordering $(A_\eps(u))^m|\Omega\rangle$ in powers of $\sqrt{\eps}$ gives, for $m > 0$,
\begin{equation}\label{powersofA}
A_\eps(u)^m |\Omega\rangle = \sum_{\ell=m}^{4m-2} \eps^{\frac{\ell}{2}} A_{\ell}^{(m)}(u) |\Omega\rangle,
\end{equation}
where $A_{\ell}^{(m)}(u)$ is given by
\begin{equation}  \label{recursionrelation} \begin{aligned}
A_{\ell}^{(m)} &= A_{1}^{(1)} A_{\ell-1}^{(m-1)}+ A_{2}^{(1)}A_{\ell-2}^{(m-1)}+A_{3}^{(1)} A_{\ell-3}^{(m-1)}+A_{4}^{(1)}A_{\ell-4}^{(m-1)} \\
&= \sum_{\underset{ \vert \alpha \vert= \ell}{\alpha \in \{1,2,3,4\}^m}} \prod_{k=1}^m A_{{\alpha_k}},
\end{aligned}
\end{equation}
where the $ A_{{\alpha_k}}(u) \equiv A_{{\alpha_k}}^{(1)}(u)$ for $\alpha_k \in \{1,2,3,4\}$ are given in Lemma~\ref{Decompositionterms}. Here, we have used the multi-index notation, i.e., $|\alpha| = \sum_{k=1}^m \alpha_k$. Note that in \eqref{powersofA} the $\ell = 4m-1$ and $\ell = 4m$ terms vanish since both $A_3(u)$ and $A_4(u)$ contain an $a_x$ acting on the vacuum $|\Omega\rangle$. For later convenience we define $A_{\ell}^{(m)}(u)$ to be zero outside the range of indices in \eqref{powersofA}. To summarize, for any $\ell\in \N$,
\begin{equation}\label{definitionA}
A_{\ell}^{(m)}(u) := \left\{\begin{array}{cl} 0 &, \text{for } \ell <  m \text{ or } \ell > 4m-2, \\ \eqref{recursionrelation} &, \text{otherwise.} \end{array}\right.
\end{equation}
We can now state the expansion of the term $\langle  e^{-\beta A_\eps(u)} \Omega ,  W_\eps(f) \Omega \rangle$ from \eqref{integral formula} in powers of $\sqrt{\eps}$.

\begin{lemma}[Formal expansion]\label{lem_Taylor_expansion}
For any $N \in \N$ we have 
\begin{equation} \label{expC1} 
\langle  e^{-\beta A_\eps(u)} \Omega ,  W_\eps(f) \Omega \rangle = \sum_{j=0}^N \eps^{\frac{j}{2}} C_{\frac{j}{2}}(u,f) + R^{(N)}_\eps(u,f), 
\end{equation}
with  $C_0(u,f)\equiv 1$ and where, for $j\geq 1$,
\begin{equation} \label{expCODDterm}
\begin{aligned}
C_{\frac{j}{2}}(u,f) := \frac{i^{j}}{j!} \langle\Omega, \Phi(f)^{j} \Omega \rangle  +  \sum_{m=1}^{j}  \frac{(-\beta )^m}{m!} \sum_{\ell=m}^{\min(4m-2,j)} \frac{i^{j-\ell }}{(j-\ell)!}  \big \langle A^{(m)}_{\ell}(u)   \Omega,  \Phi(f)^{j-\ell}  \Omega \big \rangle,
\end{aligned}
\end{equation}
with $A^{(m)}_{{\ell }}(u)$ as defined in \eqref{definitionA}. The remainder term $R^{(N)}_\eps(u,f)$ is given as
\begin{equation} \label{boundsremainders}
R^{(N)}_\eps(u,f):=\big \langle \Omega ,\big( \cR^{(N)}_{f,A} + \cM^{(N)}_A \cR^{(N)}_f+\cR^{(N)}_A \cM^{(N)}_f +\cR^{(N)}_A  \cR^{(N)}_f \big) \Omega \big \rangle,
\end{equation}
with $ \cM^{(N)}_A$, $\cR^{(N)}_A$, $\cM^{(N)}_f$, and $\cR^{(N)}_f$ from \eqref{Taylor_separate}, and
\begin{equation}
\cR^{(N)}_{f,A} := \sum_{j=N+1}^{5N-2} \eps^{\frac{j}{2}} \sum_{m=1}^{j}  \frac{(-\beta )^m}{m!} \sum_{\ell=\max(m,j-N)}^{\min(4m-2,j)} \frac{i^{j-\ell }}{(j-\ell)!}  \big \langle A^{(m)}_{\ell}(u)   \Omega,  \Phi(f)^{j-\ell}  \Omega \big \rangle.
\end{equation}
\end{lemma}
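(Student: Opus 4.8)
The plan is to substitute the two Taylor expansions with integral remainders, \eqref{Taylor1} and \eqref{Taylor2}, into the matrix element $\langle e^{-\beta A_\eps(u)}\Omega,\,W_\eps(f)\Omega\rangle$ and then sort the resulting terms by their order in $\sqrt{\eps}$. Before doing so I would record that, since $H_\eps\ge 0$ and $\widetilde{W}_\eps(u)$ is unitary, the operator $\widetilde{W}_\eps(u)^*H_\eps\widetilde{W}_\eps(u)=h(u)+A_\eps(u)$ from \eqref{Hamiltonian_expansion}--\eqref{expA} is self-adjoint and bounded below; hence $A_\eps(u)$ is self-adjoint, $e^{-s\beta A_\eps(u)}$ is bounded for $s\in[0,1]$, and the functional-calculus manipulations below are legitimate. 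Here one uses crucially that $G$ is finite, so that each $m$-particle sector of $\cF$ is finite-dimensional, $\Omega$ lies in the domain of every Wick monomial, and $\Omega$ is analytic for $\Phi(f)$. In particular $\langle e^{-\beta A_\eps(u)}\Omega,\,W_\eps(f)\Omega\rangle=\langle\Omega,\,e^{-\beta A_\eps(u)}W_\eps(f)\Omega\rangle$, and inserting \eqref{Taylor1}--\eqref{Taylor2} and multiplying out gives
\begin{equation*}
\langle e^{-\beta A_\eps(u)}\Omega,\,W_\eps(f)\Omega\rangle=\big\langle\Omega,\,\cM^{(N)}_A\cM^{(N)}_f\,\Omega\big\rangle+\big\langle\Omega,\,\big(\cM^{(N)}_A\cR^{(N)}_f+\cR^{(N)}_A\cM^{(N)}_f+\cR^{(N)}_A\cR^{(N)}_f\big)\Omega\big\rangle.
\end{equation*}

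Next I would expand the principal term $\langle\Omega,\cM^{(N)}_A\cM^{(N)}_f\,\Omega\rangle$ in powers of $\sqrt{\eps}$. Writing $\cM^{(N)}_A=\sum_{m=0}^{N}\frac{(-\beta)^m}{m!}A_\eps(u)^m$ and moving each $A_\eps(u)^m$ onto the vacuum by self-adjointness, the ordering identity \eqref{powersofA}, together with the convention \eqref{definitionA} and the recursion \eqref{recursionrelation}, gives $A_\eps(u)^m\Omega=\sum_{\ell}\eps^{\ell/2}A^{(m)}_\ell(u)\Omega$. Substituting also $\cM^{(N)}_f\Omega=\sum_{n=0}^{N}\eps^{n/2}\frac{i^n}{n!}\Phi(f)^n\Omega$ and collecting the coefficient of $\eps^{j/2}$ — so that $j=\ell+n$ with $m\le\ell\le 4m-2$ and $0\le n\le N$ — the $m=0$ part contributes $\frac{i^j}{j!}\langle\Omega,\Phi(f)^j\Omega\rangle$ and the $m\ge1$ parts contribute the double sum appearing in \eqref{expCODDterm}. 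For $0\le j\le N$ the index constraints collapse exactly to those displayed there, so the coefficient is $C_{j/2}(u,f)$ (with $C_0\equiv1$), whereas the contributions with $N+1\le j\le 5N-2$ assemble into $\langle\Omega,\cR^{(N)}_{f,A}\Omega\rangle$. The only real labor in this step is the bookkeeping of the index ranges, which is routine.

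Finally I would define $R^{(N)}_\eps(u,f)$ as the sum of $\langle\Omega,\cR^{(N)}_{f,A}\Omega\rangle$ with the three cross-terms $\langle\Omega,\cM^{(N)}_A\cR^{(N)}_f\Omega\rangle$, $\langle\Omega,\cR^{(N)}_A\cM^{(N)}_f\Omega\rangle$, and $\langle\Omega,\cR^{(N)}_A\cR^{(N)}_f\Omega\rangle$, which is precisely formula \eqref{boundsremainders}; each of these is finite by the remarks in the first paragraph (bounded $e^{-s\beta A_\eps(u)}$, vectors of coherent-state type, finite graph). I do not expect a genuine obstacle in this lemma: the statement is an algebraic identity together with the definition of the remainder. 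The one point deserving care is the justification that the functional calculus and the rearrangement of a priori unbounded operator series are permissible, which is exactly why one first records the self-adjointness and semiboundedness of $A_\eps(u)$ and the analyticity and domain properties of $\Omega$. The quantitative estimates on $R^{(N)}_\eps(u,f)$ are then deferred to Section~\ref{sec_proof_main_result}.
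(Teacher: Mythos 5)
Your proposal is correct and follows essentially the same route as the paper: insert the two Taylor expansions \eqref{Taylor1}--\eqref{Taylor2}, multiply out, use \eqref{powersofA} together with the convention \eqref{definitionA} to reorder $\langle\Omega,\cM^{(N)}_A\cM^{(N)}_f\Omega\rangle$ by powers of $\sqrt{\eps}$, and collect the orders $j\le N$ into $C_{j/2}$ and the orders $j\ge N+1$ into $\cR^{(N)}_{f,A}$, with the three cross-terms forming the rest of $R^{(N)}_\eps$. The additional remarks on self-adjointness and semiboundedness of $A_\eps(u)$ and the domain/analyticity properties of $\Omega$ are consistent with (and slightly more explicit than) the paper's appeal to functional calculus.
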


\begin{proof}
By using the Taylor expansions \eqref{Taylor1} and \eqref{Taylor2} 
we have 
\[\langle  e^{-\beta A_\eps(u)} \Omega ,  W(f) \Omega \rangle =\big \langle \Omega , \cM^{(N)}_A \cM^{(N)}_f \Omega \big \rangle  +\big \langle \Omega ,\big( \cM^{(N)}_A \cR^{(N)}_f+\cR^{(N)}_A \cM^{(N)}_f +\cR^{(N)}_A  \cR^{(N)}_f \big) \Omega \big \rangle. \]
It remains to multiply out the first term in the expression above. Defining $a_m=(-\beta )^m/ m!$ and $b_{n}=i^{n}/n!$ we have
\begin{align}
\big \langle \Omega , \cM^{(N)}_A \cM^{(N)}_f \Omega \big \rangle &= \sum_{m=0}^N \sum_{n=0}^N  \eps^{\frac{n}{2}} a_m b_{n} \langle A_\eps(u)^m  \Omega, \Phi(f)^{n}  \Omega \rangle \nonumber \\
&=\sum_{n=0}^N  \eps^{\frac{n}{2}} b_{n} \langle\Omega, \Phi(f)^{n} \Omega \rangle +\sum_{m=1}^N \sum_{n=0}^N \sum_{p=m}^{4m-2}  \eps^{\frac{p+n}{2}} a_m b_{n} \langle A_{p}^{(m)}(u) \Omega, \Phi(f)^{n}  \Omega \rangle.
\end{align}
It is now convenient to use the convention \eqref{definitionA}, i.e., $A_{\ell}^{(m)}(u) := 0$ for $\ell <  m$ or $\ell > 4m-2$, since then we can rearrange
\begin{align}
&\sum_{m=1}^N \sum_{n=0}^N \sum_{p=m}^{4m-2}  \eps^{\frac{p+n}{2}} a_m b_{n} \langle A_{p}^{(m)}(u) \Omega, \Phi(f)^{n}  \Omega \rangle \nonumber\\
&\qquad= \sum_{p=0}^{\infty} \sum_{n=0}^N \sum_{m=1}^N \eps^{\frac{p+n}{2}} a_m b_{n} \langle A_{p}^{(m)}(u) \Omega, \Phi(f)^{n}  \Omega \rangle \nonumber\\
&\qquad= \sum_{j=0}^{\infty} \eps^{\frac{j}{2}} \sum_{\tilde\ell=0}^N \sum_{m=1}^N  a_m b_{\tilde\ell}\, \langle A_{j-\tilde\ell}^{(m)}(u) \Omega, \Phi(f)^{\tilde\ell}  \Omega \rangle \nonumber\\
&\qquad= \sum_{j=1}^{5N-2} \eps^{\frac{j}{2}} \sum_{m=1}^j \sum_{\tilde\ell=0}^N  a_m b_{\tilde\ell}\, \langle A_{j-\tilde\ell}^{(m)}(u) \Omega, \Phi(f)^{\tilde\ell}  \Omega \rangle \nonumber\\
&\qquad= \sum_{j=1}^{5N-2} \eps^{\frac{j}{2}} \sum_{m=1}^j \,\sum_{\ell=\max(m,j-N)}^{\min(j,4m-2)} a_m b_{j-\ell}\, \langle A_{\ell}^{(m)}(u) \Omega, \Phi(f)^{j-\ell}  \Omega \rangle,
\end{align}
renaming $\tilde\ell = j-\ell$ in the last step (and using the convention \eqref{definitionA} again). In total, this implies the expansion \eqref{expC1}, noting that $\max(m,j-N) = m$ for $j\leq N$ and $m\geq 1$.
\end{proof}


The expansion from Lemma~\ref{lem_Taylor_expansion} can be simplified by noting that the $\eps^{\frac{j}{2}}$ terms vanish for $j$~odd.

\begin{lemma} [Vanishing of odd terms]\label{vanoddterms}
For all $j$ odd, $C_{\frac{j}{2}}(u,f) = 0$.
\end{lemma}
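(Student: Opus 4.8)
The plan is to prove this parity selection rule by equipping the Fock space with its natural $\ZZ_2$-grading. Write $\cF = \cF_+ \oplus \cF_-$ for the decomposition into the $(\pm 1)$-eigenspaces of the unitary involution $(-1)^{\Number}$; equivalently $\cF_+$ (resp.\ $\cF_-$) is the closed span of the $m$-particle sectors with $m$ even (resp.\ odd), and $\Omega \in \cF_+$. Say that a finite polynomial $B$ in the creation and annihilation operators is \emph{even} if $(-1)^{\Number} B (-1)^{\Number} = B$ and \emph{odd} if $(-1)^{\Number} B (-1)^{\Number} = -B$. Since $(-1)^{\Number} a_x (-1)^{\Number} = -a_x$ and $(-1)^{\Number} a^*_x (-1)^{\Number} = -a^*_x$, a monomial of degree $d$ (i.e.\ a product of $d$ creation and annihilation operators) has parity $(-1)^d$, and — inserting $((-1)^{\Number})^2 = 1$ between consecutive factors of $(-1)^{\Number} B_1 \cdots B_k (-1)^{\Number}$ — the parity of a product of operators is the product of their parities. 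Even operators map $\cF_\pm$ into $\cF_\pm$ and odd operators map $\cF_\pm$ into $\cF_\mp$.

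Next I would read off the parities of the ingredients of \eqref{expCODDterm}. From \eqref{Coefficients_Ham}, every monomial appearing in $A_i(u)$ has degree exactly $i$ for $i = 1,2,3,4$, so $A_i(u)$ has parity $(-1)^i$. By the product formula \eqref{recursionrelation}, $A^{(m)}_\ell(u) = \sum_{\alpha \in \{1,2,3,4\}^m,\ |\alpha| = \ell} \prod_{k=1}^m A_{\alpha_k}(u)$ is a sum of operators each of parity $\prod_{k=1}^m (-1)^{\alpha_k} = (-1)^{|\alpha|} = (-1)^\ell$, hence $A^{(m)}_\ell(u)$ has parity $(-1)^\ell$. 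Similarly $\Phi(f) = \tfrac{1}{\sqrt 2}(a(f)+a^*(f))$ is odd, so $\Phi(f)^n$ has parity $(-1)^n$.

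Finally I would combine these observations in \eqref{expCODDterm} for $j$ odd. The first term vanishes since $\Phi(f)^j\Omega \in \cF_{(-1)^j} = \cF_-$ is orthogonal to $\Omega \in \cF_+$. For a generic summand, $A^{(m)}_\ell(u)\Omega \in \cF_{(-1)^\ell}$ and $\Phi(f)^{j-\ell}\Omega \in \cF_{(-1)^{j-\ell}}$, and these subspaces are orthogonal exactly when $\ell$ and $j-\ell$ have opposite parity, i.e.\ when $j$ is odd; so every pairing $\langle A^{(m)}_\ell(u)\Omega, \Phi(f)^{j-\ell}\Omega\rangle$ vanishes and therefore $C_{j/2}(u,f) = 0$. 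I do not anticipate a real obstacle here; the only point requiring a little care is that all inner products in sight are well defined (so that speaking of the parity of a vector makes sense), which holds because $\Omega$ is an analytic vector for $\Phi(f)$ and lies in the domain of every Wick monomial, as already recalled before \eqref{powersofA}. Equivalently, one may argue directly with particle-number shifts: $A_i(u)$ changes the particle number by an amount of parity $i \bmod 2$ and $\Phi(f)$ by $\pm 1$, which yields the same cancellation.
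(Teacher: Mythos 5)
Your proof is correct and is essentially the same parity argument as the paper's: the paper counts the number of creation/annihilation operators in each monomial and uses that vacuum expectations of odd-degree monomials vanish, which is exactly your $\ZZ_2$-grading by $(-1)^{\Number}$ in different clothing. Your formulation via conjugation with $(-1)^{\Number}$ is marginally cleaner in that it disposes automatically of the normal-ordering caveat the paper has to address explicitly, but the underlying mechanism is identical.
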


\begin{proof} 
Recall from \eqref{recursionrelation} that
\begin{align}
A_{\ell}^{(m)} = \sum_{\underset{ \vert \alpha \vert= \ell}{\alpha \in \{1,2,3,4\}^m}} \prod_{k=1}^m A_{{\alpha_k}}.
\end{align}
Now recall from the definition \eqref{Coefficients_Ham} that each term in $A_{{\alpha_k}}$ contains exactly $\alpha_k$ creation or annihilation operators (in normal order). Hence, for $|\alpha|=\ell$, the products $\prod_{k=1}^m A_{{\alpha_k}}$ contain an even (resp.\ odd) number of creation/annihilation operators for $\ell$ even (resp.\ odd). Note that the terms in the products $\prod_{k=1}^m A_{{\alpha_k}}$ are not necessarily normal ordered, but normal ordering does not change the parity of the number of creation/annihilation operators. In the same way, $\Phi(f)^{j-\ell}$ contains only terms with an even (resp.\ odd) number of creation/annihilation operators for $0 \leq j-\ell$ even (resp.\ odd). Since vacuum expectations of terms with an odd number of creation/annihilation operators vanish, we have that
\begin{align}
\langle\Omega, \Phi(f)^{j} \Omega\rangle = 0 = \big \langle A^{(m)}_{\ell}(u)   \Omega,  \Phi(f)^{j-\ell}  \Omega \big \rangle \quad \text{ for } j \text{ odd } 
\end{align}
and any $0 \leq \ell \leq j$, $1 \leq m \leq j$. Thus, all terms in the definition \eqref{expCODDterm} of $C_{\frac{j}{2}}(u,f)$ vanish for $j$~odd.
\end{proof}

Lastly, let us compute the first term of $C_{\frac{j}{2}}(u,f)$ explicitly.
\begin{lemma}\label{reductionoffirstterm}
For the first term in \eqref{expCODDterm} we find
\begin{equation}
\frac{i^{2k}}{(2k)!} \langle\Omega, \Phi(f)^{2k} \Omega \rangle = \frac{(-1)^k}{k! \,4^k} \| f \|^{2k}
\end{equation}
for any $k \in \N$.
\end{lemma}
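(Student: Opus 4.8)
The plan is to deduce the identity from the standard coherent-state overlap $\langle\Omega, W_\eps(f)\Omega\rangle = e^{-\frac{\eps}{4}\|f\|^2}$ by comparing Taylor coefficients in $\eps$.

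First I would record the overlap formula. Applying the Baker--Campbell--Hausdorff identity exactly as in \eqref{coherentstate}, write $W_\eps(f) = e^{i\sqrt{\eps/2}\,(a(f)+a^*(f))} = e^{-\frac{\eps}{4}\|f\|^2}\, e^{i\sqrt{\eps/2}\,a^*(f)}\, e^{i\sqrt{\eps/2}\,a(f)}$, which is legitimate because $[\,i\sqrt{\eps/2}\,a^*(f),\, i\sqrt{\eps/2}\,a(f)\,] = \frac{\eps}{2}\|f\|^2$ is a scalar. Since $a(f)\Omega = 0$ one has $e^{i\sqrt{\eps/2}\,a(f)}\Omega = \Omega$, and since $\langle\Omega,(a^*(f))^n\Omega\rangle = 0$ for $n\ge 1$ one has $\langle\Omega, e^{i\sqrt{\eps/2}\,a^*(f)}\Omega\rangle = 1$; hence $\langle\Omega, W_\eps(f)\Omega\rangle = e^{-\frac{\eps}{4}\|f\|^2}$.

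Next I would expand both sides as everywhere-convergent power series in $\eps$ and match coefficients. Because $\Omega$ is an analytic vector for $\Phi(f)$ (as already noted before \eqref{powersofA}), the series $W_\eps(f)\Omega = \sum_{n\ge 0}\frac{(i\sqrt{\eps})^n}{n!}\Phi(f)^n\Omega$ converges in $\cF$, so $\langle\Omega, W_\eps(f)\Omega\rangle = \sum_{n\ge 0}\frac{i^n}{n!}\eps^{n/2}\langle\Omega,\Phi(f)^n\Omega\rangle$; by the parity argument from the proof of Lemma~\ref{vanoddterms} the odd-$n$ terms vanish, leaving $\sum_{k\ge 0}\frac{i^{2k}}{(2k)!}\langle\Omega,\Phi(f)^{2k}\Omega\rangle\,\eps^k$. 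On the other side $e^{-\frac{\eps}{4}\|f\|^2} = \sum_{k\ge 0}\frac{(-1)^k}{k!\,4^k}\|f\|^{2k}\,\eps^k$. Equating the coefficients of $\eps^k$ gives exactly the claimed identity. Equivalently, one may invoke Wick's theorem: $\langle\Omega,\Phi(f)^{2k}\Omega\rangle$ equals the number $(2k-1)!!$ of perfect pairings of $2k$ points times the two-point value $\langle\Omega,\Phi(f)^2\Omega\rangle = \tfrac12\|f\|^2$, and $(2k-1)!! = (2k)!/(2^k k!)$ together with $i^{2k} = (-1)^k$ reproduces the formula.

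There is no genuine difficulty here; the only point deserving a line of justification is the interchange of the vacuum pairing with the infinite sum, which follows from the absolute convergence of $\sum_n \frac{(\sqrt{\eps})^n}{n!}\,\|\Phi(f)^n\Omega\|$, i.e., from $\Omega$ being an analytic vector for $\Phi(f)$.
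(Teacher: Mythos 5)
Your proof is correct and follows essentially the same route as the paper: the Baker--Campbell--Hausdorff formula evaluates a vacuum expectation of an exponentiated field operator in closed form, and the identity follows by comparing power-series coefficients (the paper uses $\langle\Omega, e^{a(f)+a^*(f)}\Omega\rangle = e^{\|f\|^2/2}$, you use $\langle\Omega, W_\eps(f)\Omega\rangle = e^{-\eps\|f\|^2/4}$, which amounts to the same computation with $\eps$ as an explicit expansion parameter). Your remarks on analytic vectors and the Wick-theorem alternative are correct but not needed beyond what the paper already records.
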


\begin{proof}
 Such identity is well known and the expression can easily be verified inductively. Alternatively, note that by the Baker--Campbell--Hausdorff formula
\begin{equation}
\sum_{k\geq 0} \frac{1}{(2k)!} \langle \Omega , (a(f)+a^*(f))^{2k} \Omega \rangle = \langle \Omega , e^{a(f)+a^*(f)} \Omega \rangle = e^{\frac{\|f\|^2}{2}} = \sum_{k\geq 0} \frac{\Vert f\Vert^{2k} }{2^k \ k!},
\end{equation}
so comparing coefficients leads to
\begin{equation}
\frac{i^{2k}}{(2k)!} \langle\Omega, \Phi(f)^{2k} \Omega \rangle = \frac{(-1)^{k}}{(2k)!} 2^{-k} \frac{(2k)!}{2^k \, k!} \| f \|^{2k} = \frac{(-1)^k}{k! \,4^k} \| f \|^{2k}.
\end{equation}
\end{proof}

\section{Proof of the Main Result}\label{sec_proof_main_result}

\subsection{Preparatory estimates}\label{sec_technical_estimates}
The following lemmas will be used to estimate the remainder terms in the expansion of the right-hand side of \eqref{integral formula}. Recall that in our Hamiltonian \eqref{micro_Hamiltonian} we used the chemical potential $\kappa < 0$, and recall that we defined the coherent state $u_\eps$ in \eqref{coherentstate}.
\begin{lemma} \label{Hamiltonianestimate}
There is $\alpha $ with $0 < \alpha < -\kappa$ such that 
\begin{equation}\label{coherent_state_number_op}
\langle u_\eps,  \ e^{-\beta   H_\eps }  u_\eps  \rangle \leq  e^{-\beta \alpha \Vert u \Vert^2}
\end{equation}
 for all $\eps$ small enough.
\end{lemma}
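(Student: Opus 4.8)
The plan is to bound $H_\eps$ from below by a positive multiple of the number operator $\Number$, and then to exploit that the number distribution in the coherent state $u_\eps$ is explicitly computable. Note first that $e^{-\beta H_\eps}$ is a bounded positive operator of norm at most $1$, since $H_\eps \ge 0$ (as we now argue), so all expectations in sight are finite.

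I would begin by recording two structural facts about $H_\eps$. Since $\langle v,(-\Delta_d)v\rangle=\tfrac12\sum_{x\sim y}|v(x)-v(y)|^2\ge 0$ on $\ell^2(G)$ and $\kappa<0$, we have $-\Delta_d-\kappa\id\ge(-\kappa)\id$; applying ${\rm d}\Gamma$, which preserves operator order, gives ${\rm d}\Gamma(-\Delta_d-\kappa\id)\ge(-\kappa)\Number$. As $\lambda>0$, the quartic term equals $\tfrac{\eps^2\lambda}{2}\sum_{x\in V}n_x(n_x-1)\ge 0$ with $n_x:=a^*_x a_x$, so that
\[
H_\eps\ \ge\ \eps(-\kappa)\Number\ \ge\ 0 .
\]
Moreover $[H_\eps,\Number]=0$: each ${\rm d}\Gamma(B)$ commutes with $\Number={\rm d}\Gamma(\id)$, and the quartic term is a function of the mutually commuting family $\{n_x\}_{x\in V}$. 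Since $-\beta H_\eps\le -\beta\eps(-\kappa)\Number$ and these two operators commute, functional calculus in their joint spectral decomposition (using that $t\mapsto e^{t}$ is increasing) yields the operator inequality $e^{-\beta H_\eps}\le e^{-\beta\eps(-\kappa)\Number}$ on $\cF$. I would stress that commutativity is what makes this step legitimate, as $\exp$ is not operator monotone in general.

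Next I would compute the upper bound $\langle u_\eps,e^{-\beta\eps(-\kappa)\Number}u_\eps\rangle$ explicitly. The coherent state $u_\eps=\widetilde{W}_\eps(u)\Omega$ factorizes over the vertices as a tensor product of single-mode coherent states with parameters $u(x)/\sqrt{\eps}$, each an eigenvector of $a_x$ with eigenvalue $u(x)/\sqrt{\eps}$. The elementary single-mode identity $\langle z|e^{t\,a^*a}|z\rangle=e^{|z|^2(e^{t}-1)}$ then gives, for every real $t$,
\[
\langle u_\eps,e^{t\Number}u_\eps\rangle=\exp\!\Big(\tfrac{\|u\|^2}{\eps}\big(e^{t}-1\big)\Big),
\]
which is finite, so no domain subtlety arises. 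Taking $t=\beta\kappa\eps<0$ and using the elementary inequality $e^{s}-1\le s+\tfrac12 s^2$ for $s\le 0$ (the function $s+\tfrac12 s^2-e^{s}+1$ vanishes at $0$ and is decreasing), I obtain
\[
\tfrac{\|u\|^2}{\eps}\big(e^{\beta\kappa\eps}-1\big)\ \le\ \|u\|^2\big(\beta\kappa+\tfrac12\beta^2\kappa^2\eps\big)\ =\ -\beta\|u\|^2\big((-\kappa)-\tfrac12\beta\kappa^2\eps\big).
\]

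Finally, I would fix any $\alpha$ with $0<\alpha<-\kappa$ and then take $\eps$ small enough that $\tfrac12\beta\kappa^2\eps\le -\kappa-\alpha$, so that $(-\kappa)-\tfrac12\beta\kappa^2\eps\ge\alpha$; chaining the three displays gives $\langle u_\eps,e^{-\beta H_\eps}u_\eps\rangle\le e^{-\beta\alpha\|u\|^2}$, as claimed. I do not expect a genuine obstacle in this argument; the one place demanding care is the passage from the operator bound $H_\eps\ge\eps(-\kappa)\Number$ to the exponential bound, which relies on $H_\eps$ and $\Number$ commuting.
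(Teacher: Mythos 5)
Your proof is correct and follows essentially the same route as the paper: reduce to $\langle u_\eps, e^{-\beta(-\kappa)\eps\Number}u_\eps\rangle$ using positivity of the remaining part of $H_\eps$ together with $[H_\eps,\Number]=0$, then evaluate that expectation exactly in the coherent state to get $\exp\big(\tfrac{\|u\|^2}{\eps}(e^{-\beta(-\kappa)\eps}-1)\big)$. The only (harmless) differences are that the paper keeps the kinetic part as a nonnegative operator $\tilde H_\eps$ and drops $e^{-\beta\tilde H_\eps}\le 1$ rather than invoking the bound $H_\eps\ge\eps(-\kappa)\Number$, and that you make the final ``$\eps$ small enough'' step quantitative via $e^{s}-1\le s+\tfrac12 s^2$ for $s\le 0$, which the paper leaves implicit.
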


\begin{proof}
Note that we can write the Hamiltonian \eqref{micro_Hamiltonian} as $H_\eps=\tilde H_\eps-\kappa\eps \Number$, with $\tilde H_\eps \geq0$, $-\kappa>0$ and    $[\tilde H_\eps, \Number]=0$ in the strong sense. Therefore,
\begin{equation}
\langle u_\eps,  e^{-\beta   H_\eps }  u_\eps  \rangle=  \langle u_\eps,  e^{-\beta   \tilde H_\eps } e^{-\beta (-\kappa) \eps \Number}  u_\eps  \rangle \leq   \langle u_\eps,   e^{-\beta  (-\kappa) \eps \Number}  u_\eps  \rangle. 
\end{equation}
The right-hand side can be computed directly from \eqref{coherentstate} and yields
\begin{align}
\langle u_\eps,  e^{-\beta (-\kappa) \eps \Number}  u_\eps  \rangle & =  e^{-\frac{\Vert u\Vert^2}{\eps} }\langle e^{\frac{a^*(u)}{\sqrt{\eps}}}\Omega ,  e^{-\beta (-\kappa) \eps \Number} e^{\frac{a^*(u)}{\sqrt{\eps}}} \Omega \rangle \nonumber\\
& = e^{-\frac{\Vert u\Vert^2}{\eps}} \sum_{k\geq 0}\sum_{\ell\geq 0} \frac{1}{\ell! k!} \frac{1}{\eps^{k/2} \eps^{\ell/2}} \langle{a^*(u)}^\ell \Omega ,  e^{-\beta (-\kappa) \eps \Number}  {a^*(u)}^k \Omega \rangle \nonumber\\
& =  e^{-\frac{\Vert u\Vert^2}{\eps}}  \sum_{\ell \geq 0} \frac{1}{(\ell !)^2} \frac{1}{\eps^{\ell }}  e^{- \eps \beta(-\kappa) \ell }  \underbrace{ \Vert {a^*(u)}^\ell \Omega\Vert^2}_{=\ell !\Vert u\Vert^{2\ell }} \nonumber\\
& =  e^{-\frac{\Vert u\Vert^2}{\eps}}    \sum_{\ell \geq 0} \frac{1}{\ell !} \bigg(  e^{- \eps \beta (-\kappa)}  \frac{\Vert u\Vert^{2}}{\eps} \bigg)^\ell  \nonumber\\
& =e^{-\Vert u\Vert^2 \big(\frac{1-e^{-\beta (-\kappa) \eps}}{\eps}\big)}.
\end{align}
This directly implies \eqref{coherent_state_number_op}. 
\end{proof}
With the exponential decay from Lemma~\ref{Hamiltonianestimate} we will be able to absorb any polynomial growth in $\| u \|$ that comes from the following estimates.
\begin{lemma}\label{remainderestimate}
For all $m, n \in \N_0$, $s\in [0,1]$,  $f \in \ell^2(G)$, and $\eps>0$ small enough,  there exist $C,q>0$  such that for all $u\in\ell^2(G)$,

\begin{align}
\Vert A_\eps(u)^m  e^{is \sqrt{\eps} \Phi(f)} \Phi(f)^n \Omega \Vert &\leq C \eps^{m/2} \, \langle \Vert u \Vert  \rangle^q,
\end{align}
where  $\langle x \rangle := (1+|x|^2)^{1/2}$ refers to the Japanese bracket. In particular, $q=0$ for $m=0$.
\end{lemma}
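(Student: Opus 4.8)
The plan is to reduce everything to counting powers of $\sqrt{\eps}$ and then controlling the resulting polynomial-in-$\|u\|$ bounds uniformly. First I would write $A_\eps(u)^m e^{is\sqrt{\eps}\Phi(f)}\Phi(f)^n\Omega$ using \eqref{powersofA}, i.e., expand $A_\eps(u)^m\Omega$ as $\sum_{\ell=m}^{4m-2}\eps^{\ell/2}A^{(m)}_\ell(u)\Omega$; each term carries at least $\eps^{m/2}$, which already pins down the claimed $\eps^{m/2}$ prefactor (and gives $q=0$, indeed the whole quantity is $\|e^{is\sqrt\eps\Phi(f)}\Phi(f)^n\Omega\|=\|\Phi(f)^n\Omega\|$ when $m=0$, since $e^{is\sqrt\eps\Phi(f)}$ is unitary). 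The subtlety is that $A_\eps(u)$ does not commute with $e^{is\sqrt\eps\Phi(f)}$, so I would instead work from the left: $e^{is\sqrt\eps\Phi(f)}$ is unitary, so $\|A_\eps(u)^m e^{is\sqrt\eps\Phi(f)}\Phi(f)^n\Omega\|\le\|A_\eps(u)^m\|_{\mathrm{on\ the\ relevant\ subspace}}\cdot(\dots)$ is not quite right either because $A_\eps(u)$ is unbounded; the clean route is to commute $e^{is\sqrt\eps\Phi(f)}$ to the left past each $A_j(u)$ factor, or more simply to note that $\Psi_{s} := e^{is\sqrt\eps\Phi(f)}\Phi(f)^n\Omega$ is an analytic vector and each $A_j(u)$ is a Wick monomial of degree $\le 4$ with coefficients polynomial in the components $u_x,\bar u_x$ of degree $\le 2$.

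Next I would set up the bookkeeping explicitly. On the $k$-particle sector, any Wick monomial in $\{a^\sharp_x\}$ with $p$ creation and $p'$ annihilation operators has norm bounded by $C\langle k\rangle^{(p+p')/2}$ times the sup of its coefficients; since $A_j(u)$ has degree exactly $j$ in creation/annihilation operators and coefficient $O(\langle\|u\|\rangle^{\min(j,2)})$ (reading off \eqref{Coefficients_Ham}: $A_1$ is linear in $u$, $A_2$ has a piece quadratic in $u$, $A_3$ is linear in $u$, $A_4$ has no $u$), applying $A_\eps(u)=\sum_{j=1}^4\eps^{j/2}A_j(u)$ raises the particle number by at most $4$ and picks up a factor $\eps^{1/2}$ times $C\langle k\rangle^2\langle\|u\|\rangle^2$ (the $\eps^{j/2}$ with $j\ge1$ gives at least $\eps^{1/2}$, and all higher $\eps$-powers are harmless for $\eps$ small). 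Iterating $m$ times starting from the vector $\Psi_s$, whose norm is $\|\Phi(f)^n\Omega\|$ (a finite constant depending on $f,n$) and which is supported on sectors $k\le n$, I get the bound $C^m\eps^{m/2}\langle\|u\|\rangle^{2m}\,\|\Phi(f)^n\Omega\|$, which is of the asserted form with $q=2m$. The point that lets the $\eps^{j/2}$ for $j\ge 2$ be absorbed is just that $\eps\le 1$, so $\eps^{j/2}\le\eps^{1/2}$.

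The one genuine technical point is handling $e^{is\sqrt\eps\Phi(f)}$ cleanly, because we want the bound uniform in $s\in[0,1]$ and we cannot simply pull it out. Here I would use that $e^{is\sqrt\eps\Phi(f)}=\widetilde W_{\eps'}(\cdot)$-type Weyl operator conjugates $a_x,a^\sharp_x$ into $a_x+c_x(s,\eps,f)$ with $|c_x(s,\eps,f)|\le C\sqrt\eps\,|f(x)|\le C$; hence $e^{-is\sqrt\eps\Phi(f)}A_j(u)e^{is\sqrt\eps\Phi(f)}$ is again a Wick polynomial of degree $\le j$ whose coefficients are polynomials in $(u_x,\bar u_x,\sqrt\eps f_x)$ of the same structure, so after moving $e^{is\sqrt\eps\Phi(f)}$ all the way to the left (it is unitary and disappears from the norm) we are left estimating $\widetilde A_j(u)^{\dots}\cdots\widetilde A_j(u)^{\dots}\Phi(f)^n\Omega$ with the $\widetilde A_j$ enjoying the same bounds as $A_j$ uniformly for $\eps\le1$, $s\in[0,1]$. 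I expect this commutation-and-reassembly step to be the main obstacle, but it is routine: the canonical commutation relations let one compute $e^{-is\sqrt\eps\Phi(f)}a^\sharp_x e^{is\sqrt\eps\Phi(f)}$ exactly, and the shift is $O(\sqrt\eps)$, hence bounded. Everything else is the power-counting in $\sqrt\eps$ described above plus the elementary bound $\|a^\sharp(v)\psi\|\le\|v\|\,\|(\Number+1)^{1/2}\psi\|$ iterated.
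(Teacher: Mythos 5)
Your argument is correct in substance but takes a genuinely different route from the paper. The paper never conjugates the unitary away: it keeps $e^{is\sqrt{\eps}\Phi(f)}$ on the right, inserts weights $\cS^{-2k}\cS^{2k}$ with $\cS=\Number+2$ between the factors, bounds each $\|\cS^{2k-2}A_\eps(u)\cS^{-2k}\|$ as a genuine operator norm on all of Fock space (using $F(\Number)a^\sharp(f)=a^\sharp(f)F(\Number\pm 1)$ and $\|a^\sharp(f)\phi\|\le\|f\|\,\|(\Number+1)^{1/2}\phi\|$), and finally controls $\|\cS^{2m+n/2}e^{is\sqrt{\eps}\Phi(f)}\Omega\|$ via the operator inequality $e^{-is\sqrt{\eps}\Phi(f)}\Number e^{is\sqrt{\eps}\Phi(f)}\le 2\Number+1+s^2\eps\|f\|^2$. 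Your route instead conjugates each $A_j(u)$ through the Weyl factor, so the unitary drops out of the norm and you can exploit that $\Phi(f)^n\Omega$ lives in finitely many particle sectors; the price is the $O(\sqrt{\eps}\,f)$ shift of the Wick coefficients, which you correctly note is harmless for $\eps\le 1$. The paper's weighted-norm scheme is more robust (it does not need the initial vector to have bounded particle number), while yours is more elementary once the conjugation is done. Both are standard and both work.

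Two slips, neither fatal. First, in your intermediate bookkeeping you assert that $\Psi_s=e^{is\sqrt{\eps}\Phi(f)}\Phi(f)^n\Omega$ is supported on sectors $k\le n$; it is not (the Weyl factor populates all sectors), and the sector-counting argument only becomes valid after you have moved the unitary to the left, as you do in your final paragraph. The write-up should make clear that the conjugation is performed \emph{before} the iteration, not as an afterthought. Second, your coefficient count is off: $A_1(u)=a^*(h^{\mathrm H}(u))+a(h^{\mathrm H}(u))$ is \emph{not} linear in $u$, since $h^{\mathrm H}(u)$ contains the cubic term $\lambda|u|^2u$, so $\|h^{\mathrm H}(u)\|\le C(\|u\|+\|u\|^3)$. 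Each application of $A_\eps(u)$ therefore costs a factor $\langle\|u\|\rangle^{3}$ rather than $\langle\|u\|\rangle^{2}$, giving $q=3m$ instead of your $q=2m$. Since the lemma only claims existence of some $q>0$ (which is then absorbed by the Gaussian decay from Lemma~\ref{Hamiltonianestimate}), this does not affect the conclusion, but the stated exponent should be corrected.
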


\begin{proof}
Let $\cS=\Number+2$. Then we have 
\begin{align}\label{A_f_estimate}
& \big\|  A_\eps(u)^m e^{is \sqrt{\eps} \Phi(f)} \Phi(f)^n \Omega \big\| \nonumber\\ 
&\quad=\big\|  A_\eps(u) \cS^{-2} \cS^2 \, A_\eps(u)\cS^{-4} \cS^{4} \, \cdots \, A_\eps(u)  \cS^{-2m}  \nonumber\\ 
& \qquad\qquad  \cS^{2m} \Phi(f) \cS^{-(2m+1/2)} \cS^{2m+1/2} \, \cdots \, \Phi(f) \cS^{-(2m+n/2)} \cS^{2m+n/2} e^{is \sqrt{\eps} \Phi(f)} \Omega \big\| \nonumber\\ 
& \quad\leq \prod_{k=1}^m  \big\| \cS^{2k-2} A_\eps(u) \cS^{-2k} \big\| \ \prod_{\ell =1}^{n} \big\| \cS^{2m+\ell/2 -1/2} \Phi(f) \cS^{-(2m+\ell/2)} \big\|  \  \big\| \cS^{2m+n/2} e^{is \sqrt{\eps} \Phi(f)} \Omega \big\|.
\end{align}
Before estimating each term in the above expression, note that for any function $F:\N \rightarrow \R$ 
and for any  $f\in \ell^2(G)$ we have
\begin{equation}\label{F_number_commute}
F(\Number)  a^*(f)= a^*(f)  F(\Number+1),\qquad F(\Number+1)  a(f)= a(f)  F(\Number).
\end{equation}
Also recall the standard estimates
\begin{align}\label{a_a_star_bound}
\Vert a(f) \phi  \Vert \leq \Vert f\Vert  \, \Vert( \Number+1)^{1/2} \phi \Vert,\qquad
\Vert a^*(f) \phi  \Vert \leq \Vert f\Vert  \, \Vert( \Number+1)^{1/2} \phi \Vert
\end{align}
for any $\phi \in D(\Number^{1/2})$ and $f\in \ell^2(G)$. We start with the first product in \eqref{A_f_estimate}. Recalling the definition of $A_\eps(u)$ in \eqref{expA} and \eqref{Coefficients_Ham} we note that
\begin{subequations}
\begin{align}
&\big\Vert \cS^{2k-2} A_\eps(u)\cS^{-2k} \big \Vert \nonumber \\ 
&\quad \leq  \eps^{1/2} \, \big\Vert \cS^{2k-2} \  \big[ a(h^{\mathrm H}(u))+a^*(h^{\mathrm H}(u)) \big] \cS^{-2k} \big \Vert  \label{S_A_S_term1}
\\ &\quad\quad+ \eps \, \big\Vert \cS^{2k-2} \  \Big[ { d}\Gamma (-\Delta_d)
  -\kappa \Number
  +\frac{\lambda}{2} \sum_{x\in V} (a^*_x a^*_x u_x^2  +4 a^*_x a_x |u_x|^2 +a_x a_x \overline{u}_x^2) \Big] \cS^{-2k} \big \Vert  \label{S_A_S_term2}
  \\ &\quad\quad + \eps^{3/2} \, \big\Vert \cS^{2k-2} \  \Big[ \frac{\lambda}{2}    \sum_{x\in V}( a^*_x a^*_x a_x u_x +a^*_x a_x a_x \overline{u}_x    )  \Big] \cS^{-2k} \big \Vert \label{S_A_S_term3}
   \\ &\quad\quad + \eps^2 \, \big\Vert \cS^{2k-2} \  \Big[  \frac{\lambda}{2}  \sum _{x\in V} a^*_x a^*_x a_x a_x \Big] \cS^{-2k} \big \Vert. \label{S_A_S_term4}
\end{align}
\end{subequations}
Let us deal with each term separately. Using \eqref{F_number_commute} and \eqref{a_a_star_bound} we find
\begin{align} 
\eps^{-1/2}\eqref{S_A_S_term1} &\leq 
\big\Vert    a(h^{\mathrm H}(u)) \  (\Number+1)^{2k-2} \  (\Number+2)^{-2k} \big \Vert  +          \big\Vert a^*(h^{\mathrm H}(u))  \  (\Number+3)^{2k-2}   \  (\Number+2)^{-2k} \big \Vert
\nonumber\\ &\leq  C \Big( \big\Vert    a(h^{\mathrm H}(u)) \  (\Number+1)^{-2} \ \big \Vert  +          \big\Vert a^*(h^{\mathrm H}(u))    (\Number+1)^{-2}\big\Vert \Big)
\nonumber\\ &\leq C \  \Vert h^{H}(u) \Vert \nonumber\\
&\leq  C \big(  \Vert u \Vert + \Vert u \Vert ^3 \big) .
\end{align}
Similarly we have 
\begin{align}
\eps^{-1}\eqref{S_A_S_term2} &\leq 
\big\Vert  d\Gamma(- \Delta_d)  \cS^{-2} \big \Vert    +\kappa   \big\Vert  \Number \cS^{-2}  \big \Vert +  \frac{\lambda}{2} \sum_{x\in V}  |u_x|^2 \, \big\Vert      a^*_x a^*_x  (\Number +4)^{2k-2}    (\Number+2)^{-2k} \big \Vert \nonumber\\ 
& \quad+   \frac{\lambda}{2} \sum_{x\in V} 4 |u_x|^2 \,  \big\Vert     a^*_x a_x     \cS^{-2} \big \Vert +  \frac{\lambda}{2} \sum_{x\in V}  |u_x|^2 \, \big\Vert       a_x a_x  \Number ^{2k-2}   (\Number+2)^{-2k} \big \Vert \nonumber\\ 
&\leq C \big(1+\Vert u\Vert^2 \big),
\end{align}
as well as
\begin{align}
\eps^{-3/2}\eqref{S_A_S_term3} &\leq 
\frac{\lambda}{2}   \sum_{x\in V}   |u_x  | \,  \big\Vert     a^*_x a^*_x a_x (\Number+3) ^{2k-2} (\Number+2)^{-2k} \big \Vert \nonumber\\
&\quad+ \frac{\lambda}{2}    \sum_{x\in V}   |u_x | \,  \big\Vert     a^*_x a_x a_x (\Number+1) ^{2k-2} (\Number+2)^{-2k} \big \Vert \nonumber \\ 
& \leq  C  \big(1+\Vert u\Vert \big),
\end{align}
and
\begin{align}
\eps^{-2}\eqref{S_A_S_term4} &\leq  
\frac{\lambda}{2}   \sum_{x\in V}    \big\Vert     a^*_x a^*_x a_x a_x (N +2) ^{2k-2} (\Number+2)^{-2k} \big \Vert \leq C.
\end{align}
This implies that
\begin{equation}
\big\Vert \cS^{2k-2} A_\eps(u) \cS^{-2k} \big \Vert \leq C \eps^{1/2} \big( 1+ \|u\|´^3 \big).
\end{equation}
Using the same arguments we find
\begin{align}
\big\Vert \cS^{2m+\ell/2-1/2} \Phi(f) \cS^{-2m-\ell/2} \big \Vert & =  \Big\Vert \cS^{2m+\ell/2-1/2}  \frac{1}{\sqrt{2}}\Big[a(f) +a^*(f)\Big] \cS^{-2m-\ell/2} \Big \Vert 
 \leq  C \Vert f\Vert. 
\end{align}
Combining all estimates into \eqref{A_f_estimate} and noting that
\begin{equation}
e^{-is \sqrt{\eps} \Phi(f)} \Number e^{is \sqrt{\eps} \Phi(f)} \leq 2 \Number + 1 + s^2 \eps \|f\|^2,
\end{equation}
we find
\begin{align}
\big\|  A_\eps(u)^m e^{is \sqrt{\eps} \Phi(f)} \Phi(f)^n \Omega \big\| \leq C \left(\prod_{k=1}^m \eps^{1/2} \big( 1+ \|u\|^3 \big) \right) \left(\prod_{\ell=1}^n \|f\|\right) \leq C \eps^{m/2} \langle \Vert u\Vert \rangle^q
\end{align}
for $q=3m/2$, and where the constant $C$ depends on $m$, $n$, $\Vert f \Vert$, $\kappa$, and $\lambda$.
\end{proof}

\subsection{Remainder estimates}\label{sec_main_estimates}

In this subsection we first provide an estimate for the remainder term $R^{(N)}_\eps(u,f)$ from Lemma~\ref{lem_Taylor_expansion}. Afterwards. we show that the estimate is good enough such that the full remainder in the expansion of the Gibbs state, namely
\begin{equation}\label{full_remainder}
\widetilde{R}^{(N)}_\eps(f) = \int_{\ell^2(G)} e^{\sqrt{2} i \Re e \langle f,u \rangle}   \, R^{(N)}_\eps(u,f) \, e^{-\beta h(u)}  \du,
\end{equation}
is bounded and of order $\eps^{\frac{N+1}{2}}$.

\begin{lemma}[Control of the remainder]\label{lem_remainder_control}
There is $\alpha $ with $0 < \alpha < -\kappa$ such that  $R^{(N)}_\eps(u,f)$  from Lemma~\ref{lem_Taylor_expansion} satisfies the bound
\begin{equation} \label{boundsremainder}
\vert  R^{(N)}_\eps(u,f)\vert \leq \eps^{\frac{N+1}{2}} \, C \langle \Vert u \Vert \rangle^q     \bigg( 1+  e^{\beta \big(h(u) -  \alpha \Vert u \Vert^2\big)} \bigg) 
\end{equation}
for all $\eps>0$ small  enough, where $C$ depends on $N$, $\kappa$, $\lambda$, $\beta$ and $f$.
\end{lemma}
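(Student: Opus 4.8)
The plan is to bound each of the four pieces appearing in the definition \eqref{boundsremainders} of $R^{(N)}_\eps(u,f)$ separately, using the operator estimate from Lemma~\ref{remainderestimate} together with the Cauchy--Schwarz inequality, and to convert the operator norms into decay in $\Vert u\Vert$ via Lemma~\ref{Hamiltonianestimate}. Recall that $R^{(N)}_\eps = \langle\Omega,(\cR^{(N)}_{f,A} + \cM^{(N)}_A\cR^{(N)}_f + \cR^{(N)}_A\cM^{(N)}_f + \cR^{(N)}_A\cR^{(N)}_f)\Omega\rangle$. First I would treat the purely algebraic term $\cR^{(N)}_{f,A}$: it is a finite sum over $N+1\le j\le 5N-2$ of terms $\eps^{j/2}$ times vacuum expectations $\langle A^{(m)}_\ell(u)\Omega,\Phi(f)^{j-\ell}\Omega\rangle$, and each such factor is bounded using $\Vert A^{(m)}_\ell(u)\Omega\Vert \le \sum_{|\alpha|=\ell}\prod_k\Vert\cdots\Vert \le C\langle\Vert u\Vert\rangle^{q}$ (an $\eps$-free consequence of the same commutator bookkeeping as in Lemma~\ref{remainderestimate}, or directly from that lemma at $\eps$-independent level by pulling out powers of $\sqrt\eps$) and $\Vert\Phi(f)^{j-\ell}\Omega\Vert\le C\Vert f\Vert^{j-\ell}$. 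Since $j\ge N+1$ here, this contributes $\le \eps^{(N+1)/2}C\langle\Vert u\Vert\rangle^q$.

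Next I would handle the three mixed terms involving at least one remainder factor $\cR^{(N)}_A$ or $\cR^{(N)}_f$ from \eqref{Taylor_separate}. For the term $\langle\Omega,\cM^{(N)}_A\cR^{(N)}_f\Omega\rangle$, write $\cR^{(N)}_f = (i\sqrt\eps\Phi(f))^{N+1}\int_0^1 e^{is\sqrt\eps\Phi(f)}\frac{(1-s)^N}{N!}\ds$ and move the adjoint of $\cM^{(N)}_A = \sum_{m=0}^N\frac{(-\beta A_\eps(u))^m}{m!}$ onto the left slot, so that (after inserting the integral) the quantity is bounded by $\eps^{(N+1)/2}\sum_m\frac{\beta^m}{m!}\Vert f\Vert^{N+1}\int_0^1\Vert A_\eps(u)^m e^{is\sqrt\eps\Phi(f)}\Phi(f)^{N+1}\Omega\Vert\,\frac{(1-s)^N}{N!}\ds$; here I have used that $\Phi(f)$ commutes with $e^{is\sqrt\eps\Phi(f)}$ so the $(\Phi(f))^{N+1}$ can be slid to sit next to $e^{is\sqrt\eps\Phi(f)}\Omega$. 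Lemma~\ref{remainderestimate} then gives $\le C\eps^{m/2}\langle\Vert u\Vert\rangle^q$ for each $m$, hence this term is $\le\eps^{(N+1)/2}C\langle\Vert u\Vert\rangle^q$. The term $\langle\Omega,\cR^{(N)}_A\cM^{(N)}_f\Omega\rangle$ is symmetric: write $\cR^{(N)}_A = (-\beta A_\eps(u))^{N+1}\int_0^1 e^{-s\beta A_\eps(u)}\frac{(1-s)^N}{N!}\ds$, apply Cauchy--Schwarz to split off the $e^{-s\beta A_\eps(u)}$ factor, and use that $e^{-s\beta A_\eps(u)} = \widetilde W_\eps(u)^* e^{-s\beta(H_\eps - h(u))/1}\widetilde W_\eps(u)$ — more precisely, since $\widetilde W_\eps(u)^* H_\eps\widetilde W_\eps(u) = h(u) + A_\eps(u)$, one has $e^{-\beta A_\eps(u)}\Omega = e^{\beta h(u)}\widetilde W_\eps(u)^* e^{-\beta H_\eps}\widetilde W_\eps(u)\Omega = e^{\beta h(u)}\widetilde W_\eps(u)^* e^{-\beta H_\eps}|u_\eps\rangle$, so that $\Vert e^{-s\beta A_\eps(u)}\Omega\Vert^2 = e^{2s\beta h(u)}\langle u_\eps, e^{-2s\beta H_\eps}u_\eps\rangle \le e^{2s\beta h(u)}\langle u_\eps, e^{-2s\beta\alpha'\Vert u\Vert^2\cdot(\text{rescaled})}\rangle$; applying Lemma~\ref{Hamiltonianestimate} (with $2s\beta$ in place of $\beta$, which is harmless for $s\in[0,1]$ up to adjusting $\alpha$) bounds this by $e^{2s\beta(h(u)-\alpha\Vert u\Vert^2)}$. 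The leading factor $(-\beta A_\eps(u))^{N+1}$ acting on a bounded-norm vector contributes, via Lemma~\ref{remainderestimate}, a factor $C\eps^{(N+1)/2}\langle\Vert u\Vert\rangle^q$, while $\cM^{(N)}_f\Omega$ has norm $\le C$. This yields the contribution $\eps^{(N+1)/2}C\langle\Vert u\Vert\rangle^q e^{\beta(h(u)-\alpha\Vert u\Vert^2)}$, which is the source of the second summand in \eqref{boundsremainder}.

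Finally, the cross-remainder term $\langle\Omega,\cR^{(N)}_A\cR^{(N)}_f\Omega\rangle$ is of order $\eps^{N+1}$ (two remainder factors each carrying $\eps^{(N+1)/2}$), hence a fortiori $\le\eps^{(N+1)/2}C\langle\Vert u\Vert\rangle^q(1+e^{\beta(h(u)-\alpha\Vert u\Vert^2)})$ by the same Cauchy--Schwarz-plus-Lemma~\ref{Hamiltonianestimate} argument. Collecting the four bounds and taking the worst-case polynomial degree $q$ gives \eqref{boundsremainder}. I expect the main obstacle to be the bookkeeping in the $\cR^{(N)}_A$-type terms: one must correctly identify $e^{-s\beta A_\eps(u)}\Omega$ with a coherent-state matrix element of $e^{-s\beta H_\eps}$ in order to apply Lemma~\ref{Hamiltonianestimate}, and one must check that the interaction-picture operator $(-\beta A_\eps(u))^{N+1}$ can be controlled in operator norm relative to powers of $\cS = \Number+2$ uniformly while the surviving semigroup $e^{-s\beta H_\eps}$ absorbs the resulting polynomial growth in $\Vert u\Vert$ — this is exactly where one needs $\alpha < -\kappa$ strictly, so that $e^{-\beta\alpha\Vert u\Vert^2}\cdot\langle\Vert u\Vert\rangle^q$ remains integrable after the factor $e^{\beta h(u)}$ (which grows like $e^{\beta(-\kappa)\Vert u\Vert^2}$ for small $u$, but the quartic term in $h(u)$ dominates at infinity) is reinstated in \eqref{full_remainder}. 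The rest is routine application of \eqref{a_a_star_bound}, \eqref{F_number_commute}, and Cauchy--Schwarz.
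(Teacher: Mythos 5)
Your proposal follows essentially the same route as the paper: the same four-term decomposition of $R^{(N)}_\eps$, Cauchy--Schwarz to isolate $\Vert e^{-s\beta A_\eps(u)}\Omega\Vert$, the identification $\langle \Omega, e^{-2s\beta A_\eps(u)}\Omega\rangle = e^{2s\beta h(u)}\langle u_\eps, e^{-2s\beta H_\eps}u_\eps\rangle$ followed by Lemma~\ref{Hamiltonianestimate}, and Lemma~\ref{remainderestimate} for the vectors carrying powers of $A_\eps(u)$ and $\Phi(f)$. The argument is correct, including your (justified) observation that Lemma~\ref{Hamiltonianestimate} must be applied with $2s\beta$ in place of $\beta$, which the paper does implicitly.
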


\begin{proof}
By Lemma~\ref{lem_Taylor_expansion}, the expression $R^{(N)}_\eps(u,f)$ is composed of four terms which we estimate separately. We assume $\eps \leq 1$ in the following estimates. Using first some elementary bounds and Cauchy--Schwarz, then Lemma~\ref{remainderestimate} for $s=0$, then $\widetilde{W}_\eps(u)^*  H_\eps \widetilde{W}_\eps(u) = h(u) + A_\eps(u)$, and in the end Lemma~\ref{Hamiltonianestimate}, we find
\begin{align}
\Big \vert  \langle  \Omega , \cR^{(N)}_A  \cM^{(N)}_f  \Omega \rangle \Big \vert  &=  \Bigg \vert   \frac{(-\beta )^{N+1}}{N!}  \sum_{n=0}^{N} \eps^{\frac{n}{2}}\frac{i^{n}}{n!}  \int_0^1 {(1-s)^N}   \Big \langle \Omega ,  A_\eps(u)^{N+1} e^{-s\beta A_\eps(u)}   \Phi(f)^{n}  \, \Omega \Big \rangle    \ds  \Bigg \vert 
\nonumber\\ & \leq  \frac{\beta^{N+1}}{N!}   \sum_{n=0}^{N} \frac{1}{n!}    \int_0^1   \Big \vert  \Big \langle \Omega ,   A_\eps(u)^{N+1} e^{-s\beta A_\eps(u)}   \Phi(f)^{n}\,  \Omega \Big \rangle   \Big \vert     \ds 
\nonumber\\ &   \leq   \frac{\beta^{N+1}}{N!}   \sum_{n=0}^{N} \frac{1}{n!}    \int_0^1 \big\Vert e^{-s\beta A_\eps(u)}   \Omega  \big\Vert \,  \big\Vert  A_\eps(u)^{N+1}    \Phi(f)^{n} \,  \Omega \big\Vert     \ds 
\nonumber\\ &   \leq    \eps^{\frac{N+1}{2}} C \langle \Vert u \Vert \rangle^q      \int_0^1 \sqrt{ \langle \Omega,   e^{-2s\beta A_\eps(u)}   \Omega  \rangle }     \ds
\nonumber\\ &  =  \eps^{\frac{N+1}{2}}   C \langle \Vert u \Vert \rangle^q       \int_0^1 e^{s\beta h(u)} \sqrt{ \langle u_\eps,    e^{-2s\beta H_\eps  }   u_\eps  \rangle }      \ds
\nonumber\\ & \leq   \eps^{\frac{N+1}{2}} C  \langle \Vert u \Vert \rangle^q      \int_0^1 e^{s\beta h(u)}  e^{-s \beta \alpha \Vert u \Vert ^2}  \ds.
\end{align}
The other terms can be estimated similarly, using again Lemmas~\ref{Hamiltonianestimate} and \ref{remainderestimate}. We have 
\begin{align}
 \Big \vert \langle  \Omega ,  \cR^{(N)}_A   \cR^{(N)}_f   \Omega \rangle  \Big \vert  &=  \bigg \vert \eps^{\frac{N+1}{2}}    \frac{(-\beta )^{N+1}}{N!} \frac{i^{N+1}}{N!}  \int_{0}^{1}\mathop{}\!\mathrm{d}\tilde s\, {(1-\tilde s)^N} \int_0^1 \ds\,  {(1-s)^N} \nonumber \\
 &\qquad\qquad\qquad\qquad\qquad\qquad\qquad  \Big \langle \Omega ,     A_\eps(u)^{N+1} e^{-s\beta A_\eps(u)}    e^{i \tilde s \sqrt{\eps} \Phi(f)}   \Phi(f)^{N+1} \, \Omega \Big \rangle \bigg \vert 
\nonumber\\ &\leq  \eps^{\frac{N+1}{2}}  \frac{\beta^{N+1}}{N!} \frac{1}{N!} \int_{0}^{1}\mathop{}\!\mathrm{d}\tilde s \int_0^1 \ds \,  \Vert e^{-s\beta A_\eps(u)}   \Omega  \Vert  \Vert   A_\eps(u)^{N+1}    e^{i \tilde s \sqrt{\eps} \Phi(f)}  \Phi(f)^{N+1} \, \Omega \Vert
\nonumber\\ &\leq  \eps^{{N+1}} C \langle \Vert u \Vert \rangle^q      \int_0^1 e^{s\beta h(u)}  e^{-s \beta \alpha \Vert u \Vert ^2}      \ds,
\end{align}
as well as
\begin{align}
 \Big \vert \langle  \Omega , \cM^{(N)}_A  \cR^{(N)}_f  \Omega \rangle  \Big \vert  &=   \bigg \vert  \eps^{\frac{N+1}{2}}   \frac{i^{N+1}}{N!}  \sum_{m=0}^{N} \frac{(-\beta)^{m}}{m!}  \int_0^1 {(1-s)^N}   \Big \langle \Omega ,    A_\eps(u)^{m}  e^{i  s \sqrt{\eps} \Phi(f)}        \Phi(f)^{N+1} \, \Omega \Big \rangle    \ds    \bigg \vert 
\nonumber\\ & \leq  \eps^{\frac{N+1}{2}}  \frac{1}{N!}  \sum_{m=0}^{N} \frac{\beta^{m}}{m!}    \Vert    A_\eps(u)^m \Omega \Vert \,  \Vert    \Phi(f)^{N+1} \, \Omega \Vert     
\nonumber\\ & \leq  \eps^{\frac{N+1}{2}} C  \langle \Vert u \Vert \rangle^q.
\end{align}
Lastly,
\begin{align}
\Big \vert \langle  \Omega , \cR^{(N)}_{f,A} \Omega \rangle  \Big \vert &= \Bigg \vert\sum_{j=N+1}^{5N-2} \eps^{\frac{j}{2}} \sum_{m=1}^{j}  \frac{(-\beta )^m}{m!} \sum_{\ell=\max(m,j-N)}^{\min(4m-2,j)} \frac{i^{j-\ell }}{(j-\ell)!}  \big \langle A^{(m)}_{\ell}(u)   \Omega,  \Phi(f)^{j-\ell} \, \Omega \big \rangle \Bigg \vert \nonumber\\
&\leq \sum_{j=N+1}^{5N-2} \eps^{\frac{j}{2}} \sum_{m=1}^{j}  \frac{\beta^m}{m!} \sum_{\ell=\max(m,j-N)}^{\min(4m-2,j)} \frac{1}{(j-\ell)!}  \big\| A^{(m)}_{\ell}(u)   \Omega \big\| \, \big\| \Phi(f)^{j-\ell} \, \Omega \big\| \nonumber\\
&\leq \eps^{\frac{N+1}{2}} C \langle \Vert u \Vert \rangle^q,
\end{align}
since $\big\| A^{(m)}_{\ell}(u)   \Omega \big\| \leq C \langle \Vert u \Vert \rangle^q$ by repeatedly applying the standard estimates \eqref{a_a_star_bound}. To conclude the bound \eqref{boundsremainder} recall that $\alpha$ from Lemma~\ref{Hamiltonianestimate} satisfies $0 < \alpha < -\kappa$, and that $h(u) \geq -\kappa \|u\|^2$ (with $\kappa<0$), hence $h(u) - \alpha \Vert u \Vert^2 > 0$ and
\begin{equation}
\int_0^1 e^{s \beta \big(h(u) - \alpha \Vert u \Vert^2\big)} \ds \leq e^{\beta \big(h(u) - \alpha \Vert u \Vert^2\big)}.
\end{equation}
\end{proof}

From Lemma~\ref{lem_remainder_control} we know that the full remainder $\widetilde{R}^{(N)}_\eps(f)$ from \eqref{full_remainder} is of order $\eps^{\frac{N+1}{2}}$. It remains to prove the integrability in $u$.
\begin{lemma}[Integrability of the remainder]\label{precision}
For all $ N \in \N$ we have 
\begin{equation}
\big| \widetilde{R}^{(N)}_\eps(f) \big| \leq C \eps^{\frac{N+1}{2}}.
\end{equation}
\end{lemma}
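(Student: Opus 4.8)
The plan is to reduce the claim to the pointwise estimate of Lemma~\ref{lem_remainder_control} together with two elementary Gaussian integrals. First, since $\bigl| e^{\sqrt{2}\,i\,\Re e \langle f,u\rangle} \bigr| = 1$, the triangle inequality applied to the definition \eqref{full_remainder} gives
\begin{equation}
\bigl| \widetilde{R}^{(N)}_\eps(f) \bigr| \leq \int_{\ell^2(G)} \bigl| R^{(N)}_\eps(u,f) \bigr| \, e^{-\beta h(u)} \, \du .
\end{equation}
Inserting the bound \eqref{boundsremainder} from Lemma~\ref{lem_remainder_control}, the factor $e^{\beta ( h(u) - \alpha \Vert u \Vert^2 )}$ multiplying the Gibbs weight $e^{-\beta h(u)}$ collapses to $e^{-\beta \alpha \Vert u \Vert^2}$, so that
\begin{equation}\label{two_gaussians}
\bigl| \widetilde{R}^{(N)}_\eps(f) \bigr| \leq \eps^{\frac{N+1}{2}} \, C \int_{\ell^2(G)} \langle \Vert u \Vert \rangle^q \Bigl( e^{-\beta h(u)} + e^{-\beta \alpha \Vert u \Vert^2} \Bigr) \du .
\end{equation}

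Second, I would show that the remaining $u$-integral is a finite constant depending only on $|V|$, $q$, $\beta$, $\kappa$ (hence on $N$, through $q = 3(N+1)/2$ and the constant of Lemma~\ref{lem_remainder_control}). For the second summand this is immediate: identifying $\ell^2(G) \cong \C^{|V|} \cong \R^{2|V|}$ and using $\beta \alpha > 0$, the integral $\int_{\R^{2|V|}} \langle |v| \rangle^q e^{-\beta \alpha |v|^2}\, dv$ is a finite Gaussian moment. For the first summand, observe from \eqref{eq.ham.dnls} that $-\Delta_d \geq 0$ and $\lambda \geq 0$ yield the lower bound $h(u) \geq -\kappa \Vert u \Vert^2$ with $-\kappa > 0$; hence $e^{-\beta h(u)} \leq e^{-\beta(-\kappa)\Vert u \Vert^2}$, and $\int_{\ell^2(G)} \langle \Vert u \Vert \rangle^q e^{-\beta(-\kappa)\Vert u \Vert^2}\, \du$ is again a finite Gaussian moment. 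Substituting these two bounds into \eqref{two_gaussians} and absorbing the finite $u$-integrals into $C$ gives $\bigl| \widetilde{R}^{(N)}_\eps(f) \bigr| \leq C \eps^{\frac{N+1}{2}}$, as claimed.

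The argument is essentially bookkeeping, and there is no serious obstacle; the two points that genuinely must be checked are (i) that the merely quadratic coercivity of $h$ — we only have $h(u) \gtrsim \Vert u \Vert^2$ in the direction that matters, not the quartic growth — still dominates the polynomial prefactor $\langle \Vert u \Vert \rangle^q$ uniformly, which is fine since a Gaussian weight beats any polynomial, and (ii) that the exponent $\alpha$ produced by Lemma~\ref{lem_remainder_control} is used consistently, so that the cancellation $e^{\beta ( h(u) - \alpha \Vert u \Vert^2 )} e^{-\beta h(u)} = e^{-\beta \alpha \Vert u \Vert^2}$ is valid for a single choice of $\alpha$ with $0 < \alpha < -\kappa$.
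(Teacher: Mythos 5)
Your proof is correct and follows essentially the same route as the paper: apply the pointwise bound of Lemma~\ref{lem_remainder_control}, cancel $e^{\beta h(u)}$ against the Gibbs weight, and observe that both resulting integrands are dominated by Gaussian moments (the paper compresses your lower bound $h(u)\geq -\kappa\Vert u\Vert^2$ into the remark ``since $\beta>0$ and $\alpha>0$''). Your extra bookkeeping on the finiteness of the two $u$-integrals is just a more explicit version of the same argument.
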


\begin{proof}
Applying Lemma~\ref{lem_remainder_control} leads to
\begin{align}
\bigg| \int_{\ell^2(G)}  e^{\sqrt{2} i \Re e \langle f,u \rangle} R^{(N)}_\eps(u,f)  e^{-\beta h(u)} \du \bigg| &\leq C \eps^{\frac{N+1}{2}} \int_{\ell^2(G)}  \langle \Vert u \Vert \rangle^q \bigg( 1+  e^{\beta h(u) - \beta\alpha \Vert u \Vert^2} \bigg) e^{-\beta h(u)} \du \nonumber\\
& \leq C \eps^{\frac{N+1}{2}} \int_{\ell^2(G)}  \langle \Vert u \Vert \rangle^q \bigg( e^{-\beta h(u)} +  e^{- \beta \alpha \Vert u \Vert ^2} \bigg) \du \nonumber\\
& \leq C \eps^{\frac{N+1}{2}}
\end{align}
since $\beta >0$ and $\alpha > 0$.
\end{proof}

We can now put everything together to prove our main theorem.

\begin{proof}[Proof of Theorem~\ref{Theorem1}]
Combining Lemmas~\ref{integral formula} and \ref{lem_Taylor_expansion} we find, for any $M\in\N$,
\begin{equation}
(\eps \pi)^{|V|}  {\rm Tr} (e^{-\beta H_\eps} W(f)) - \sum_{j=0}^M \eps^{\frac{j}{2}}  \int_{\ell^2(G)} \ e^{\sqrt{2} i \Re e \langle f,u \rangle} C_{\frac{j}{2}}(u,f) e^{-\beta h(u)}  \du  =  \widetilde{R}^{(M)}_\eps(f),
\end{equation}
where the first summand in the definition of $C_{\frac{j}{2}}(u,f)$ from \eqref{expCODDterm} is given by Lemma~\ref{reductionoffirstterm}, and with remainder $\widetilde{R}^{(M)}_\eps(f)$ defined in \eqref{full_remainder}. In Lemma~\ref{precision} we have proven that $|\widetilde{R}^{(M)}_\eps(f)| \leq C \eps^{\frac{M+1}{2}}$. All coefficients $C_{\frac{j}{2}}(u,f)$ with odd $j$ vanish by Lemma~\ref{vanoddterms}. Thus, for $M =: 2N+1$ odd we have $|\widetilde{R}^{(2N+1)}_\eps(f)| \leq C \eps^{N+1}$, and for $M =: 2N+2$ even we have $|\widetilde{R}^{(2N+1)}_\eps(f)| \leq C \eps^{N+\frac{3}{2}} \leq C \eps^{N+1}$ as well.
\end{proof}

\vskip 2mm

\appendix 

\section{Computation of the first two coefficients}\label{appendix_C_1_C_2}
Here, we compute the coefficients $C_1(u,f)$ and $C_2(u,f)$ from Theorem~\ref{Theorem1} explicitly. Recall that
\begin{equation}
\begin{aligned}
C_{{j}}(u,f)& =  
\frac{(-1)^j}{j! \,4^j} \| f \|^{2j}  +  \sum_{m=1}^{2j}  \frac{(-\beta )^m}{m!} \sum_{\ell=m}^{\min(4m-2,2j)} \frac{i^{2j-\ell }}{(2j-\ell)!}   \big \langle A^{(m)}_{{\ell}}(u)    \Omega,  \Phi(f)^{2j-\ell}  \, \Omega \big \rangle.
\end{aligned}
\end{equation}
We then find

\begin{align} 
C_{1}(u,f)&= 
- \frac{1}{4}  \Vert f\Vert^2 -\beta \sum_{\ell=1}^{2} \frac{i^{2-\ell }}{(2-\ell)!}   \big \langle A^{(1)}_{{\ell}}(u)    \Omega,  \Phi(f)^{2-\ell} \,  \Omega \big \rangle +  \frac{\beta^2}{2!}  \big \langle A^{(2)}_{2}(u)     \Omega,      \Omega \big \rangle
\nonumber\\ & = -\frac{1}{4}  \Vert f\Vert^2  -\beta i   \big \langle A^{(1)}_{1}(u)     \Omega,   \Phi(f)    \Omega \big \rangle - \beta  \big \langle A^{(1)}_{2}(u)     \Omega,     \Omega \big \rangle+ \frac{\beta^2}{2!}  \big \langle A^{(2)}_{2}(u)     \Omega,     \Omega \big \rangle,
\end{align}
where, by the definition \eqref{recursionrelation}, $A^{(1)}_{1} = A_{1}$, $A^{(1)}_{2} = A_2$, and $A^{(2)}_{2} = A_{1} A_{1}$. Now, recalling the definition \eqref{Coefficients_Ham} and using that $a(g) \Omega=0$ for any $g\in \ell^2(G)$, we have 
\begin{subequations}
\begin{align}
\big \langle A^{(1)}_{1}(u)     \Omega,   \Phi(f)    \Omega \big \rangle &= \frac{1}{\sqrt{2}}   \langle a^*(h^{\mathrm H}(u))   \Omega,  a^*(f)   \Omega \rangle  = \frac{1}{\sqrt{2}} \langle h^{\mathrm H}(u), f \rangle  , \\
\big \langle A^{(1)}_{2}(u)     \Omega,     \Omega \big \rangle &= 0, \\
\big \langle A^{(2)}_{2}(u)     \Omega,     \Omega \big \rangle &= \big \langle  a^*(h^{\mathrm H}(u))   \Omega,    a^*(h^{\mathrm H}(u))    \Omega \big \rangle = \Vert h^{\mathrm H}(u)\Vert^2.
\end{align}
\end{subequations}
To summarize,
\begin{align} 
C_{{1}}(u,f) = - \frac{\Vert f \Vert^2}{4} -i \frac{\beta}{\sqrt{2}} \langle h^{\mathrm H}(u), f \rangle + \frac{\beta^2}{2} \Vert h^{\mathrm H}(u)  \Vert^2.
\end{align}
The second coefficient is given by
\begin{align}
C_2(u,f) &= \frac{1}{32}  \Vert f \Vert^{4}  +  \sum_{m=1}^{4} \frac{(-\beta )^m}{m!} \sum_{\ell =m }^{\min(4m-2,4)} \frac{i^{4-\ell}}{(4-\ell)!}   \langle A^{(m)}_{{\ell}}(u)   \Omega, \Phi(f)^{4-\ell}  \,\Omega\rangle \nonumber\\
\begin{split}
&= \frac{1}{32}  \Vert f \Vert^{4} 
-  \beta \bigg( -\frac{i}{6} \langle A^{(1)}_{1}(u)   \Omega, \Phi(f)^3  \Omega\rangle - \frac{1}{2} \langle A^{(1)}_{2}(u)   \Omega,  \Phi(f)^2 \,\Omega\rangle \bigg)
\\ 
&\quad+ \frac{ \beta^2}{2} \bigg(  -\frac{1}{2} \langle A^{(2)}_{2}(u)   \Omega, \Phi(f)^2 \, \Omega\rangle + i \langle A^{(2)}_{3}(u)   \Omega, \Phi(f) \Omega\rangle
 +  \langle A^{(2)}_{4}(u)   \Omega,  \Omega\rangle \bigg)
\\
&\quad - \frac{ \beta^3}{6}  \Big(   i \langle A^{(3)}_{3}(u)   \Omega, \Phi(f) \Omega\rangle +  \langle A^{(3)}_{4}(u)   \Omega, \Omega\rangle \Big)
+ \frac{ \beta^4}{24}  \langle A^{(4)}_{4}(u)   \Omega, \Omega\rangle.
\end{split}
\end{align}
Using again the definitions \eqref{Coefficients_Ham} and \eqref{recursionrelation} we find explicitly
\begin{align}
\begin{split}
C_2(u,f) &= \frac{1}{32} \Vert f \Vert^4 \\
&\quad -\beta \bigg( -\frac{i}{4\sqrt{2}} \Vert f \Vert^2 \langle h^{\mathrm H}(u),f \rangle - \frac{\lambda}{8} \langle u^2,f^2 \rangle  \bigg) 
\\ &\quad +{\beta^2} \bigg(-\frac{1}{8} \Vert h^{\mathrm H}(u) \Vert^2  \Vert f\Vert^2 -\frac{1}{4}\langle h^{\mathrm H}(u), f \rangle ^2 + \frac{\lambda i }{2 \sqrt{2}} \langle u^2,  h^{\mathrm H}(u) f \rangle \\
&\quad\quad\qquad + \frac{i}{ 2 \sqrt{2}}  \Big\langle    h^{\mathrm H}(u), -\Delta_d f\Big\rangle  - \frac{\kappa i }{2\sqrt{2}}  \langle h^{\mathrm H}(u), f \rangle +\frac{\lambda i }{\sqrt{2}} \langle u  h^{\mathrm H}(u), uf \rangle  + \frac{\lambda^2}{4} \sum_{x\in V} \vert u_x \vert^4  \bigg) 
\\ &\quad -{\beta^3} \bigg( \frac{i}{2\sqrt{2}}  \langle h^{\mathrm H}(u), f \rangle  \Vert h^{\mathrm H}(u) \Vert ^2 +\frac{\lambda}{6}  \langle u^2,h^{\mathrm H}(u)^2 \rangle +  \frac{1}{6} \langle h^{\mathrm H}(u), -\Delta_d  h^{\mathrm H}(u) \rangle  - \frac{\kappa}{6} \Vert h^{\mathrm H}(u)\Vert^2 
\\ &\quad\quad\qquad +  \frac{\lambda}{3} \Vert u  h^{\mathrm H}(u)\Vert^2+   \frac{\lambda}{6} \langle h^{\mathrm H}(u)^2,u^2\rangle \bigg) \\
&\quad+\frac{\beta^4}{ 8} \Vert h^{\mathrm H}(u) \Vert^4.
\end{split}
\end{align}


\begin{thebibliography}{References}	

\bibitem{ammari:hal-02104091}
Z.~Ammari and A.~Ratsimanetrimanana, 
\newblock {\em High temperature convergence of the KMS boundary conditions: the Bose-Hubbard model on a finite graph.}
\newblock {Commun. Contemp. Math.}, 23(5), 2021.


\bibitem{AS}
Z. Ammari and V. Sohinger,
\newblock {\em Gibbs measures as unique KMS equilibrium states of nonlinear
   Hamiltonian PDEs.}
  \newblock {Rev. Mat. Iberoam.}, 39(1):29--90, 2023.


\bibitem{AGGL}
M. Aizenman, G. Gallavotti, S. Goldstein, and J.L. Lebowitz,
\newblock {\em Stability and equilibrium states of infinite classical systems.}
\newblock {Commun. Math. Phys.}, 48(1):1--14, 1976.

\bibitem{bossmann_review}
L. Boßmann,
\newblock {\em Low-energy spectrum and dynamics of the weakly interacting Bose gas.}
\newblock {J. Math. Phys. (Special issue: XX International Congress on Mathematical Physics)}, 63, 061102, 2022.




\bibitem{bossmann_petrat_seiringer_2021}
L. Boßmann, S. Petrat, and R. Seiringer,
\newblock {\em Asymptotic expansion of low-energy excitations for weakly interacting bosons.}
\newblock {Forum Math. Sigma}, 9, e28, 2021.






\bibitem{bossmann_petrat_soffer_2021}
L. Boßmann, S. Petrat, P. Pickl, and A. Soffer,
\newblock {\em Beyond Bogoliubov dynamics.}
\newblock {Pure appl. anal.}, 3:677-726, 2021.




\bibitem{Bossmann_2023}
L. Boßmann and S. Petrat,
\newblock {\em Weak Edgeworth expansion for the mean-field Bose gas.}
\newblock {Lett. Math. Phys.}, 113(4):77, 2023.






\bibitem{BLMP}
L. Boßmann,  N. Leopold,  D. Mitrouskas, and S. Petrat,
\newblock {\em Asymptotic analysis of the weakly interacting Bose gas: A collection of recent results and applications.}
\newblock {In: A. Bassi, S. Goldstein, R. Tumulka, N. Zanghi (eds), Physics and the Nature of Reality. Fundamendal Theories of Physics, vol. 215. Springer, Cham}, 2024.


\bibitem{BLMP2}
L. Boßmann, N. Leopold, D. Mitrouskas, and S. Petrat,
\newblock {\em A Note on the binding energy for Bosons in the mean-field limit.}
\newblock {J. Stat. Phys.}, 191:48, 2024.


\bibitem{Bratelli-Robinson}
O. Bratteli and D. W. Robinson,
\newblock {\em Operator algebras and quantum statistical mechanics 1. C*- and  W*- algebras, symmetry groups, decomposition of states.}
\newblock{Texts and Monographs in Physics. Springer-Verlag, New York, second edition, 1987.}

\bibitem{BSP}
J.B. Bru and  W.S. Pedra, 
\newblock {\em Classical dynamics from self-consistency equations in quantum mechanics.}
\newblock  {J. Math. Phys.}  63(5):052101, 2022.


\bibitem{DV}
N. Drago and C.J.F. van de Ven,
\newblock  {\em DLR–KMS correspondence on lattice spin systems.}
\newblock  {Lett. Math. Phys.} 113:88, 2023.


\bibitem{Falconi_Nelson}
M. Falconi, N. Leopold, D. Mitrouskas, and S. Petrat,
\newblock {\em Bogoliubov dynamics and higher-order corrections for the regularized Nelson model.}
\newblock {Rev. Math. Phys.}, 35(4):2350006, 2023.

\bibitem{FKSS}
J. Fröhlich, A. Knowles, B. Schlein, and V. Sohinger,
\newblock {\em Gibbs measures of nonlinear Schr\"{o}dinger equations as limits of many-body quantum states in dimensions $d\leq 3$.}
\newblock {Commun. Math. Phys.} 356:883--980, 2017. 

\bibitem{FKSS1}
J. Fr\"{o}hlich, A. Knowles, B. Schlein, and V. Sohinger,
\newblock {\em A microscopic derivation of time-dependent correlation functions
   of the $1D$ cubic nonlinear Schr\"{o}dinger equation.}
   \newblock {Adv. Math.}, 353:67--115, 2019.


\bibitem{FKSV}
J. Fröhlich, A. Knowles, B. Schlein, and V. Sohinger,
\newblock {\em The mean-field limit of quantum Bose gases at positive temperature.}
\newblock {J. Am. Math. Soc.}, 35:1, 2021.


\bibitem{GV}
G. Gallavotti and E. Verboven,
\newblock {\em On the classical {KMS} boundary condition.}
\newblock {Nuovo cimento B}, 28(1):274--286, 1975.



\bibitem{HHW}
R. Haag, N.M. Hugenholtz, and M. Winnink,
\newblock {\em On the equilibrium states in quantum statistical mechanics.}
\newblock {Commun. Math. Phys.}, 5:215--236, 1967.


\bibitem{LM}
M. Lewin,
\newblock {\em Mean-field limits for quantum systems and nonlinear Gibbs
   measures.}
\newblock {Proc. Int. Cong. Math.}, 5:3800–3821, EMS Press, Berlin, 2022.


\bibitem{LRN}
M. Lewin, P.T. Nam, and N. Rougerie,
\newblock {\em Derivation of nonlinear Gibbs measures from many-body quantum mechanics.}
\newblock {J. Éc. Polytech. Math.} 2:65--115, 2015.



\bibitem{LTR}
M. Lewin,  P.T. Nam, and N. Rougerie,
\newblock {\em Classical field theory limit of many-body quantum Gibbs states in 2D and 3D.}
\newblock {Invent. Math.}, 224:1--130,  2021.

\bibitem{Li}
E.H. Lieb,
\newblock {\em The classical limit of quantum spin systems.}
\newblock {Commun. Math. Phys.}, 31:327--340, 1973.

\bibitem{PPZ}
E. Picari, A. Ponno, and L. Zanelli,
\newblock {\em Mean field derivation of DNLS from the Bose-Hubbard model.}
\newblock {Ann. Henri Poincar\'{e}}, 23:1525--1553, 2022.


\bibitem{RS}
A. Rout and V. Sohinger,
\newblock {\em A microscopic derivation of Gibbs measures for the 1D focusing cubic nonlinear Schr\"{o}dinger equation.} 
\newblock {Commun. Part. Diff. Eq.}, 48:1008--1055, 2023.


\bibitem{Ven}
C.J.F.  van de Ven,
\newblock {\em Gibbs states and their classical limit.}
\newblock {Rev. Math. Phys.}, in press, 2024. 

\end{thebibliography}
\end{document}